\titleformat*{\section}{\normalsize\bfseries}
\titleformat*{\subsection}{\normalsize\bfseries}
\definecolor{DarkGray}{rgb}{0.1,0.1,0.5}
\newcommand{\bra}[1]{{\langle#1|}}
\newcommand{\ket}[1]{{|#1\rangle}}
\newcommand{\ketbra}[2]{{\ket{#1}\!\bra{#2}}}
\newcommand{\abs}[1]{{\lvert #1\rvert}}	% since the delimiters do not scale, it might be a good idea to add a dummy {} at the end, so \abs{big expression}^2 has the superscript at a low height
\newcommand{\trnorm}[1]{{\| #1 \|_{\mathrm{tr}}}}
\newcommand{\smatrx}[1]{\ensuremath{\left(\begin{smallmatrix}#1\end{smallmatrix}\right)}}
\DeclareMathOperator{\Tr}{\operatorname{Tr}}
\def\A {{\mathcal A}}
\def\B {{\mathcal B}}
\def\C {{\bf C}}
\def\E {{\mathcal E}}
\def\F {{\mathcal F}}
\def\G {{\mathcal G}}
\def\H {{\mathcal H}}
\let\Lstroke\L	\def\L {{\mathcal L}}		
\def\cP {{\mathcal P}}
\def\S {{\mathcal S}}
\newcommand{\NP}{\ensuremath{\mathsf{NP}}}%{{\mathcal{NP}}}
\newcommand{\QMIP}{\ensuremath{\mathsf{QMIP}}}%{{\mathcal{NP}}}
\newcommand{\MIP}{\ensuremath{\mathsf{MIP}}}%{{\mathcal{NP}}}
\DeclareMathOperator{\poly}{\operatorname{poly}}
\newcommand{\identity}{\ensuremath{\boldsymbol{1}}} %\mathbb{I}
\def\phasegate {G}		%% careful changing this; we write ``a $\phasegate$ gate", and it is in \figref{f:combinedgadgetcircuit}
\newcommand{\kappaEPR}{\kappa_*}
\def\Aad {\A_{\text{ad}}}				% adaptive Alice-Eve super-operator for computation by teleportation
\def\Aadhat {\hat{\A}_{\text{ad}}}		% ideal adaptive Alice-Eve super-operator
\def\Bad {\B_{\text{ad}}}				% adaptive Bob-Eve super-operator
\def\Badhat {\hat{\B}_{\text{ad}}}		% ideal adaptive Bob-Eve super-operator
\def\device{D}	% used for an indeterminate prover: for $\device \in \{A, B\}$, $\H_\device$, etc.
\def\h #1{h_{#1}}
\def\hA #1{h_{#1}^{\smash{A}}}
\def\hB #1{h_{#1}^{\smash{B}}}
\def\hX #1{h_{#1}^{\smash{\device}}}
\def\hdec #1#2{#1{h}_{#2}}
\def\hAdec #1#2{#1{h}_{#2}^{\smash{A}}}
\def\hBdec #1#2{#1{h}_{#2}^{\smash{B}}}
\def\hXdec #1#2{#1{h}_{#2}^{\smash{\device}}}
\def\RAjah #1#2#3{R^A_{#2}({#3})}
\def\RBjah #1#2#3{R^B_{#2}({#3})}
\def\RAa #1{R^A_{#1}}
\def\RBa #1{R^B_{#1}}
\def\RXa #1{R^{\device}_{#1}}
\def\RBdeca #1#2{#1{R}^B_{#2}}
\def\RXjah #1#2#3{R^{\device}_{#2}({#3})}
\def\RXdecjah #1#2#3#4{#1{R}^{\device}_{#3}({#4})}
\def\PAjh #1#2{P^A_{#1}({#2})}		% with P_{j,k} notation, cannot hide the j argument
\def\PBjh #1#2{P^B_{#1}({#2})}
\def\PXjh #1#2{P^{\device}_{#1}({#2})}
\def\PABjh #1#2{P^{AB}_{#1}({#2})}
\def\EAj #1{\E^A_{#1}}
\def\EAdecj #1#2{#1{\E}^A_{#2}}
\def\EAjh #1#2{\E^{A \vert \smash{#2}}_{#1}}
\def\EBj #1{\E^B_{#1}}
\def\EBdecj #1#2{#1{\E}^B_{#2}}
\def\EXj #1{\E^{\device}_{#1}}
\def\EXdecj #1#2{#1{\E}^{\device}_{#2}}
\def\EBjh #1#2{\E^{B \vert \smash{#2}}_{#1}}
\def\EXjh #1#2{\E^{{\device} \vert \smash{#2}}_{#1}}
\def\EABj #1{\E^{AB}_{#1}}
\def\EABdecj #1#2{#1{\E}^{AB}_{#2}}
\def\EABjh #1#2{\E^{AB \vert \smash{#2}}_{#1}}
\def\Bguessj #1{\G^{B}_{#1}}
\def\UsingleXjh #1#2{U^{\device}_{#1}({#2})}
\def\UmultiXj #1{M^{\device}_{#1}}
\def\UmultiXjh #1#2{M^{\device}_{#1}({#2})}
\def\UidealA {\mathcal{I}^A}
\def\UidealB {\mathcal{I}^B}
\def\UidealX {\mathcal{I}^{\device}}
\def\UidealXdec #1{#1{\mathcal{I}}^{\device}}
\def\AAunitary {V}	% rotates from $\ket{(a, x)_A}$ to $\ket{a', x')_A}$
\def\AAunitaryAj #1{\AAunitary^A_{#1}}
\def\psione {\psi}
\def\rhoone{\rho_1}
\def\rhodecone #1{#1{\rho}_1}
\def\rhoj #1{\rho_{#1}}
\def\rhodecj #1#2{#1{\rho}_{#2}}
\def\XA {\mathcal{X}^A}
\def\XB {\mathcal{X}^B}
\def\XX {\mathcal{X}^{\device}}
\def\XmultiX {\mathcal{Y}^{\device}}
\newcounter{sprows}
\newlength{\spheight}
\newlength{\spraise}
\newlength{\commentslength}
\newcommand{\rem}[1]{}
\newtheorem{theorem}{Theorem}[section]
\newtheorem{lemma}[theorem]{Lemma}
\newtheorem{definition}[theorem]{Definition}
\newfont{\subsubsecfnt}{ptmri8t at 11pt}
\renewcommand{\subparagraph}[1]{\smallskip{\subsubsecfnt #1.}}
\newcommand{\eqnref}[1]{\hyperref[#1]{{(\ref*{#1})}}}
\newcommand{\thmref}[1]{\hyperref[#1]{{Theorem~\ref*{#1}}}}
\newcommand{\lemref}[1]{\hyperref[#1]{{Lemma~\ref*{#1}}}}
\newcommand{\corref}[1]{\hyperref[#1]{{Corollary~\ref*{#1}}}}
\newcommand{\defref}[1]{\hyperref[#1]{{Definition~\ref*{#1}}}}
\newcommand{\secref}[1]{\hyperref[#1]{{Section~\ref*{#1}}}}
\newcommand{\figref}[1]{\hyperref[#1]{{Figure~\ref*{#1}}}}
\newcommand{\tabref}[1]{\hyperref[#1]{{Table~\ref*{#1}}}}
\newcommand{\remref}[1]{\hyperref[#1]{{Remark~\ref*{#1}}}}
\newcommand{\appref}[1]{\hyperref[#1]{{Appendix~\ref*{#1}}}}
\newcommand{\claimref}[1]{\hyperref[#1]{{Claim~\ref*{#1}}}}
\newcommand{\factref}[1]{\hyperref[#1]{{Fact~\ref*{#1}}}}
\newcommand{\propref}[1]{\hyperref[#1]{{Proposition~\ref*{#1}}}}
\newcommand{\exampleref}[1]{\hyperref[#1]{{Example~\ref*{#1}}}}
\newcommand{\conjref}[1]{\hyperref[#1]{{Conjecture~\ref*{#1}}}}
\let\oldthebibliography=\thebibliography
\let\endoldthebibliography=\endthebibliography
\begin{document}
\def\compilefullpaper{}

\title{\Large Classical command of quantum systems via rigidity of CHSH games} 
\author{Ben W.~Reichardt \\ {\small University of Southern California} \and Falk Unger \\ {\small Knight Capital Group} \and Umesh Vazirani \\ {\small UC Berkeley}}
\date{}

\twocolumn[\begin{@twocolumnfalse}
\maketitle
\begin{abstract}
Can a classical system command a general adversarial quantum system to realize arbitrary quantum dynamics?  If so, then we could realize the dream of device-independent quantum cryptography: using untrusted quantum devices to establish a shared random key, with security based on the correctness of quantum mechanics.  It would also allow for testing whether a claimed quantum computer is truly quantum.  Here we report a technique by which a classical system can certify the joint, entangled state of a bipartite quantum system, as well as command the application of specific operators on each subsystem.  This is accomplished by showing a strong converse to Tsirelson's optimality result for the Clauser-Horne-Shimony-Holt (CHSH) game: the only way to win many games is if the bipartite state is close to the tensor product of EPR states, and the measurements are the optimal CHSH measurements on successive qubits.  This leads directly to a scheme for device-independent quantum key distribution.  Control over the state and operators can also be leveraged to create more elaborate protocols for reliably realizing general quantum circuits.  
\end{abstract}
\end{@twocolumnfalse}]

Do the laws of quantum mechanics place any limits on how well a classical experimentalist can characterize the state and dynamics of a large quantum system?  As a thought experiment, consider that we are presented with a quantum system, together with instructions on how to control its evolution from a claimed initial state.  Can we, as classical beings, possibly convince ourselves that the quantum system was indeed initialized as claimed, and that its state evolves as we instruct? 

More formally, model the quantum system as contained in a black box, and model our classical interactions with it as questions and answers across a digital interface, perhaps of buttons and light bulbs (\figref{f:blackbox}).  Using this limited interface, we wish to characterize the initial state of the system.  We also wish to certify that on command---by pressing a suitable sequence of buttons---the system applies a chosen local Hamiltonian, or equivalently a sequence of one- and two-qubit quantum gates, and outputs desired measurement results.  

\begin{figure}[b!]
\centering
\raisebox{-.25cm}{\includegraphics[scale=.5]{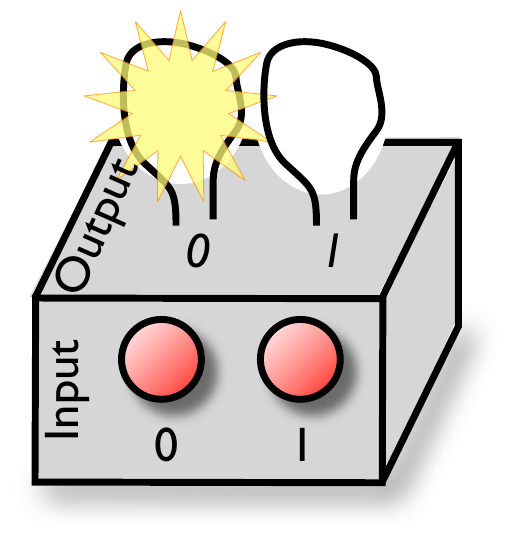}}
\caption{\small 
{\bf Classical interaction with a quantum system.} A~general system can be abstracted as a black box, with two buttons for accepting binary input and two light bulbs for output.  Using this interface, we wish to control fully the system's quantum dynamics.} \label{f:blackbox}
\end{figure}

Although a philosophical question, a positive resolution would have important consequences.  It is particularly relevant in quantum cryptography, where it is natural to model the quantum system as adversarial since the goal is to protect honest users from malicious adversaries.  The \emph{raison d'\^etre} of quantum cryptography is to create a cryptographic system with security premised solely on basic laws of physics, and with quantum key distribution (QKD) and its security proofs~\cite{BennettBrassard84qkd, LoChau98qkdsecurity, ShorPreskill00qkdsecurity} it appeared to have achieved exactly this. However, attackers have repeatedly breached the security of QKD experiments, by exploiting imperfect implementations of the quantum devices~\cite{ZhaoFungQiChenLo07qkdattack, LydersenWiechersWittmannElserSkaarMakarov10qkdattack, GerhardtLiuLamasLinaresSkaarKurtsieferMakarov10qkdbroken}.  Rather than relying on ad hoc countermeasures, Mayers and Yao's $1998$ vision of {device-independent} (DI) QKD~\cite{MayersYao98chsh}, hinted at earlier by Ekert~\cite{Ekert91qkd}, relaxes all modeling assumptions on the devices, and even allows for them to have been constructed by an adversary.  It instead imagines giving the devices tests that cannot be passed unless they carry out the QKD protocol securely.  The challenge at the heart of this vision is for a classical experimentalist to force untrusted quantum devices to act according to certain specifications.  DIQKD has not been known to be possible; security proofs to date require the unrealistic assumption that the devices have no memory between trials, or that each party has many, strictly isolated devices~\cite{BarrettHardyKent04diqkd, MasanesRennerChristandlWinterBarrett06DIQKDnosignalingcomposablesecurity, AcinMassarPironio06DIQKDnosignaling, Masanes09diqkdnosignalingcomposablesecurity, HanggiRennerWolf10diqkd, AcinBrunnerGisinMassarPironioScarani07diqkdcollectiveattacks, PironioAcinBrunnerGisinMassarScarani09qkd, McKague09deviceindependent, HanggiRenner10deviceindependent, MasanesPironioAcin10deviceindependent}.  A scheme for characterizing and commanding a black-box quantum device would provide a novel approach to achieving DIQKD.  

Further, as the power of quantum mechanics is harnessed at larger scales, for example with the advent of quantum computers, it will be useful to evaluate whether a quantum device in fact carries out the claimed dynamics~\cite{AharonovBenOrEban08authenticated, BroadbentFitzsimonsKashefi08authenticated}.  Finally, we might wish to test the applicability of quantum mechanics for large systems, a situation in which Nature itself plays the role of the adversary~\cite{AharonovVazirani12quantummechanics}.

The existence of a general scheme for commanding an adversarial quantum device appears singularly implausible.  For example, in an adversarial setting, experiments cannot be repeated exactly to gather statistics, since a system with memory could deliberately deceive the experimentalist.  More fundamentally, as macroscopic, classical entities, our access to a quantum system is extremely limited and indirect, and the measurements we apply collapse the quantum state.  Furthermore, whereas the dimension of the underlying Hilbert space scales exponentially in the number of particles or can be infinite, the information accessible via measurement only grows linearly. Indeed, as formulated it is impossible to command a single black-box system.  Quite simply, one cannot distinguish between a quantum system that evolves as desired and a device that merely simulates the desired evolution using a classical~computer.  

In this paper, we consider a closely related scenario.  Suppose we are instead given two devices, each modeled as a black box as above, and prevented from communicating with each other.  In this setting, with no further assumptions, we show how to classically command the devices.  That is, there is a strategy for pushing the buttons such that the answering light bulb flashes will satisfy a prescribed test only if the two devices started in a particular initial quantum state, to which they applied a desired sequence of quantum gates.  Moreover, though impractical, the scheme is theoretically efficient---in the sense that the total effort, measured by the number of button pushes, scales as a polynomial function of the size of the desired quantum~circuit.  A DIQKD scheme follows, among other consequences.  

\begin{figure}
\centering
\raisebox{-.25cm}{\includegraphics[scale=.3]{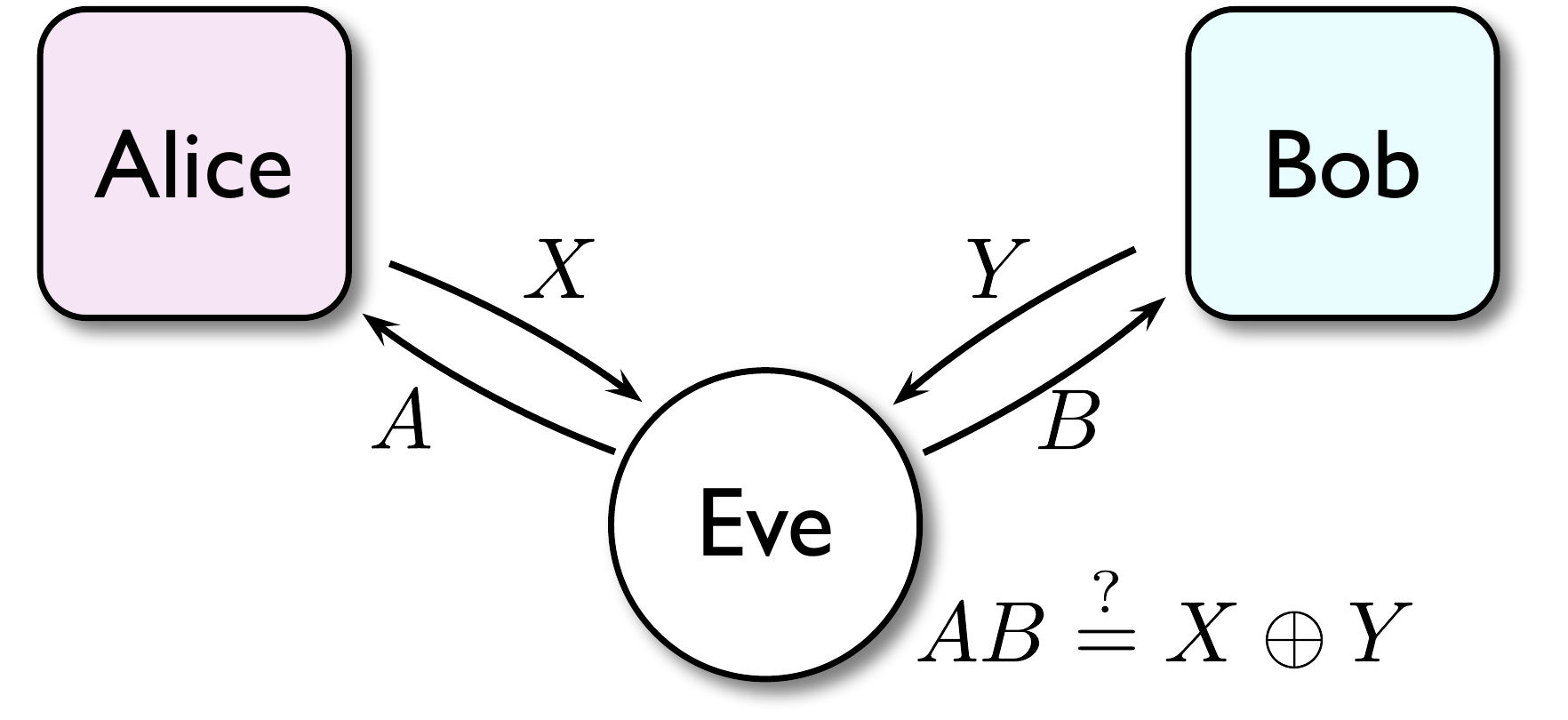}}
\caption{
\small 
{\bf Test for quantumness.}  In a CHSH experiment, or ``game," the experimentalist Eve sends independent, uniformly random bits $A$ and $B$ to the devices Alice and Bob, respectively, who respond with bits $X$ and~$Y$.  The devices ``win" the game if $A B = X \oplus Y$.  By a Bell inequality, classical devices can win with probability at most $3/4$.  Quantum devices can win with probability $\omega^* = \cos^2(\frac\pi8) \approx 85.4\%$, if they follow an ideal CHSH strategy: on a shared Einstein-Podolsky-Rosen (EPR) state $\ket \varphi = \tfrac{1}{\sqrt 2}(\ket{00} + \ket{11})$, Bob measures the Pauli operator $\sigma_z$ if $B = 0$ or $\sigma_x$ if $B = 1$, and Alice measures $\tfrac{1}{\sqrt 2}(\sigma_z + (-1)^A \sigma_x)$.  
Tsirelson showed that this strategy is optimal~\cite{Tsirelson80inequality}.
} \label{f:chsh}
\end{figure}

The starting point for our protocol is the famous Bell experiment~\cite{Bell64epr}, and its subsequent distillation by Clauser, Horne, Shimony and Holt (CHSH)~\cite{ClauserHorneShimonyHolt69chshgame}.  Conceptually modeled as a game (\figref{f:chsh}), it provides a ``test for quantumness," a way for an experimentalist, whom we shall call Eve, to demonstrate the entanglement of two space-like separated devices, Alice and~Bob.  

Consider a protocol in which Eve plays a long sequence of CHSH games with Alice and Bob, and tests that they win close to the optimal fraction $\omega^*$ of the games.  This paper's main technical result establishes that if the devices pass Eve's test with high probability, then at the beginning of a randomly chosen long subsequence of games, Alice and Bob must share many EPR states in tensor product, that they measure one at a time using the single-game ideal CHSH operators of \figref{f:chsh}.  This is a step towards the general vision outlined above because it characterizes the initial state of many qubits, and allows Eve to command the devices to perform certain single-qubit operations.  Of course, we cannot hope to characterize the devices' strategies exactly, but only for a suitable notion of approximation.  

In order to make a more precise statement, first consider a single CHSH game.  We show that if the devices win with probability $\omega^* - \epsilon$, then they must share a state that is $O(\sqrt \epsilon)$-close to an EPR state, possibly in tensor product with an additional state.  Moreover their joint measurement strategy is necessarily $O(\sqrt \epsilon)$-close to the ideal strategy from \figref{f:chsh}.  (That is, applying Alice's actual measurement operator to the shared state gets within distance $O(\sqrt \epsilon)$ of the result from applying her ideal measurement operator to the EPR state tensored with the ancilla; and similarly for Bob.)  Since each device can locate its qubit share of the EPR state arbitrarily within its Hilbert space, these statements hold only up to local isometries.  This may be seen as a robust converse to Tsirelson's inequality, and as a rigidity property of the CHSH game: a nearly maximal Bell inequality violation rigidly locks into place the devices' shared state and measurement directions.  

A converse to Tsirelson's inequality for the CHSH game has been shown previously in the exact case~\cite{BraunsteinRevzen92tsirelsonconverse, PopescuRohrlich92tsirelsonconverse}.  Robustness is important for applications, however, because the success probability of a system can never be known exactly.  Robust, $\epsilon > 0$, converse statements have been shown based on a conjecture~\cite{BardynLiewMassarMcKagueScarani09deviceindependent} or under restrictive symmetry assumptions~\cite{AcinBrunnerGisinMassarPironioScarani07diqkdcollectiveattacks, PironioAcinBrunnerGisinMassarScarani09qkd}.  Recently, robustness has independently been shown for the CHSH game~\cite{McKagueYangScarani12chshrigidity, MillerShi12chshrigidity}.  

Scaling up to a sequence of $n$ CHSH games, suppose Alice and Bob use a strategy such that they win at least $(1 - \epsilon) \omega^* n$ of the games with high probability. By basic statistics, their strategy at the beginning of most games will win that single game with probability at least $(1 - \epsilon^{\Omega(1)}) \omega^*$.  Rigidity for the one-shot game therefore applies.  However, their strategy for playing the $j$th game could depend on the previous games.  The states close to EPR states used in different games could overlap significantly, and their locations could depend on the history.  The multi-game rigidity theorem rules out such wayward behavior.  It says that for most random blocks of $m = n^{\Omega(1)}$ consecutive games, at the start of the block Alice and Bob must share a state that is close to a tensor product of $m$ EPR states, tensored with an additional state, and must play each $j$th game by making measurements that are close to the ideal CHSH strategy on the $j$th EPR state---different games being entirely independent.  

One way to view this theorem is that it scales up the CHSH test for quantumness and allows for identifying many qubits' worth of entanglement.  Much more than that, however, the multi-game rigidity theorem gives strong control over the devices' measurement operators for different games.  As described below, combining the CHSH game protocol with protocols for state and process tomography, and for computation by teleportation, thereby gives a method for realizing arbitrary dynamics in quantum systems without making assumptions about the internal structure or operations.  The dynamics are realized as the joint evolution of two isolated quantum systems, Alice and Bob, mediated by a classical experimentalist, Eve.  

The problem of controlling computationally powerful but untrusted resources lies at the foundation of computer science.  In the complexity class~$\NP$, for example, a polynomial-time routine---the ``verifier"---is allowed one round of interaction with an arbitrarily powerful, but malicious, ``prover."  We show that the same verifier can exploit the power of quantum-mechanical provers.  In particular: 

1. A classical verifier can efficiently simulate a quantum computer by interacting with two untrusted, polynomial-time quantum provers.  This delegated computation scheme is also \emph{blind}, meaning that each prover learns no more than the length of the computation.  

2. The verifier in any quantum multi-prover interactive proof system can be assumed to be classical.  Formally, the complexity classes $\QMIP$ and $\MIP^*$ are equal.  

\noindent
Previous work has considered a verifier who can store and control a constant number of qubits while interacting with a single prover~\cite{AharonovBenOrEban08authenticated, BroadbentFitzsimonsKashefi08authenticated, FitzsimonsKashefi12blind, BarzKashefiBroadbentFitzsimonsZeilingerWalther12blindexperiment}.  Our work is also inspired by a proposal~\cite{BroadbentFitzsimonsKashefi10qmip} that $\QMIP$ should equal $\MIP^*$.  Our protocol has a very different form, based on the multi-game rigidity theorem.  

%\vspace{-.08in}
\subsubsection*{Tensor-product structure for repeated %CHSH 
games} \label{s:sequentialstructuredCHSHgameshavetensorproductstructuresketch}
%\vspace{-.06in}

A strategy $\S$ for playing $n$ sequential CHSH games specifies Alice and Bob's initial joint state as well as their measurement operators for every possible situation.  That is, for $X \in \{A, B\}$ and each $j = 1, \ldots, n$, $\S$ specifies the measurement operators used by device~$X$ in game $(j, \hX{j-1})$, where $\hX{j-1}$ is any transcript of the device's input and output bits for the first $j-1$ games.  For two strategies to be ``close" means that the distributions of game transcripts they induce should be close in total variation distance; and that for most transcripts (drawn from either distribution), the resulting quantum states should be close in a suitable norm.  We combine these conditions into one by defining for any strategy a block-diagonal density matrix that stores both the classical transcript and the resulting quantum~state: 
\begin{equation} \label{e:examplecombinedtranscriptstatedensitymatrix}
\rhoj{j} = \bigoplus_{h_{j-1}} \Pr[h_{j-1}] \, \rhoj{j}(\h{j-1})
 \enspace .
\end{equation}
Here $\h{j-1} = (\hA{j-1}, \hB{j-1})$ is the full transcript for the first $j-1$ games and $\rhoj{j}(\h{j-1})$ is the state at the beginning of game~$j$ conditioned on $\h{j-1}$.  Two strategies~$\S$ and~$\tilde \S$ are close if the~associated $\rhoj{j}$ and $\rhodecj{\tilde}{j}$ are close in trace distance, for every~$j$.  

Assume that for every $j$ and most $\h{j-1}$, the devices' conditional joint strategy at the beginning of game~$j$ is ``$\epsilon$-structured," meaning that it wins with probability at least $\omega^* - \epsilon$.  Our key theorem establishes that up to local basis changes, the devices' initial state must be close to $n$ EPR states, possibly in tensor product with an irrelevant extra state, and that their total strategy $\S$ must be close to an ideal strategy $\hat{\S}$ that plays game~$j$ using the $j$th EPR state.  Since the structure assumption can be established by martingale arguments on $\poly(n)$ sequential CHSH games, this implies the multi-game rigidity theorem.  

The main challenge is to ``locate" the ideal strategy $\hat \S$ within Alice and Bob's Hilbert space, i.e., to find an isometry on each of their spaces under which their states and measurement operators are close to ideal.  However, a priori, we do not know whether $\S$ calls for the devices to measure actual qubits in each step, or even if so whether the qubits form EPR states, qubits for different games overlap each other, or the locations of the qubits depend on the outcomes of previous games.  

The construction to locate the qubits in the single-game rigidity theorem is a good~place to start.  Consider an $\epsilon$-structured strategy, consisting of some shared mixed state in $\H_A \otimes \H_B$, and two-outcome projective measurements for each of Eve's possible questions.  Truncate the devices' Hilbert spaces to finitely many dimensions, then decompose each space by Jordan's lemma~\cite{Jordan75projections} into the direct sum of two-dimensional spaces invariant under the projections.  Within each such two-dimensional subspace adjust the angles between the projections to match the ideal strategy.  The resulting operators define underlying qubits.\footnote{A %longer 
technical account with full proofs will appear elsewhere~{\cite{ReichardtUngerVazirani12qmip}.}}  

For multiple CHSH games, the given strategy~$\S$ can be transformed into a nearby ideal strategy $\hat \S$ by a three-step sequence:

1. First, replace each device's measurement operators by the ideal operators promised by the single-game rigidity theorem.  In the resulting strategy $\tilde \S$, each device $X$ plays every game~$(j, \hX{j-1})$ using the ideal CHSH game operators on some qubit, up to a local change in basis.  However, the basis change can depend arbitrarily on $\hX{j-1}$, and the qubits for different~$j$ need not be in tensor product.  

2. In a ``multi-qubit ideal strategy" $\bar \S$, the qubits used in each game can still depend on the local transcripts but must at least lie in tensor product with the qubits from previous games. This imposes a tensor-product subsystem structure that previous DIQKD proofs have assumed.  The tensor-product structure is constructed beginning with a trivial transformation on~$\tilde \S$: to each device, add~$n$ ancilla qubits each in state~$\ket 0$.  Next, after a qubit has been measured, say as~$\ket{\alpha_j}$ in game~$j$, swap it with the $j$th ancilla qubit, then rotate this fresh qubit from $\ket 0$ to $\ket{\alpha_j}$ and continue playing games $j+1, \ldots, n$.  This defines a unitary change of basis that places the outcomes for games $1$ to~$j$ in the first $j$ ancilla qubits, and leaves the state in the original Hilbert space unchanged.  Since qubits are set aside after being measured, the qubits for later games are automatically in tensor product with those for earlier games; the resulting strategy $\bar \S$ is multi-qubit ideal.  At the end of the $n$ games, swap back the ancilla qubits and undo their rotations, using the transcript.  

3. In the last step, we replace $\bar \S$ with an ideal strategy $\hat{\S}$, in which Alice and~Bob each play using a fixed set of $n$ qubits.  Fix a transcript $\hdec{\hat}{n}$, chosen at random.  For the first time, change the devices' initial state: replace $\rhoone$ with $\rhodecone{\hat}$, a state having $n$ EPR states in the locations determined by $\hdec{\hat}{n}$ in $\bar \S$.  In~$\hat \S$, the devices play using these EPR states, regardless of the actual transcript.  This $\hat \S$ is the desired ideal strategy.

%\vspace{-.08in}
\subsubsection*{Ideal strategy $\hat S$ is close to $\S$} \label{s:close}
%\vspace{-.06in}

It remains to show that the transformation's three steps incur a small error: $\hat \S$ is close to $\S$.  A major theme in the analysis is to leverage the known tensor-product structure between $\H_A$ and $\H_B$ to extract a tensor-product structure within $\H_A$ and~$\H_B$.

1. $\S \approx \tilde \S$: Although elementary, explaining this step is useful for establishing some notation.  Let $\rhoone$ be the devices' initial shared state, possibly entangled with the environment.  Let~$\EAj{j}$ and $\EBj{j}$ be the super-operators that implement Alice and Bob's respective strategies for game~$j$, $\EABj{j} = \EAj{j} \otimes \EBj{j}$ and $\EABj{j,k} = \EABj{k} \cdots \EABj{j}$ for $j \leq k$; thus the state $\rhoj{j}$ of Eq.~\eqnref{e:examplecombinedtranscriptstatedensitymatrix} equals $\EABj{1,j-1}(\rhoone)$.  For $D \in \{A, B\}$, let $\EXdecj{\tilde}{j}$ be the super-operator that replaces the actual measurement operators with the ideal operators promised by the CHSH rigidity theorem.  $\tilde \S$ is given by $\rhoone$, $\{ \EAdecj{\tilde}{j} \}$ and $\{ \EBdecj{\tilde}{j} \}$.  If $\Pr[\text{game~$j$ is $\epsilon$-structured}] \geq 1-\delta$, then $\trnorm{\EABj{j}(\rhoj{j}) - \EABdecj{\tilde}{j}(\rhoj{j})} \leq 2 \delta + O(\sqrt \epsilon)$.  (This expression uses Eq.~\eqnref{e:examplecombinedtranscriptstatedensitymatrix} to combine bounds on the probability of the bad event and the $O(\sqrt \epsilon)$ error from the good event.)  To show our goal, that $\EABj{1,n}(\rhoone) \approx \EABdecj{\tilde}{1,n}(\rhoone)$ in trace distance, use a hybrid argument that works backwards from game~$n$ to game~$1$ fixing each game's measurement operators one at a time.  

2. $\tilde \S \approx \bar \S$: The key to showing that $\bar \S$ is close to $\tilde \S$ is the fact that operations on one half of an EPR state can equivalently be performed on the other half, since for any $2 \times 2$ matrix~$M$, $(M \otimes I)(\ket{00} + \ket{11}) = (I \otimes M^T)(\ket{00} + \ket{11})$.  This means that the outcome of an $\epsilon$-structured CHSH game would be nearly unchanged if Bob were hypothetically to perform Alice's measurement before his own.  By moving Alice's measurement operators for games $j+1$ to $n$ over to Bob's side, we see that they cannot significantly affect the qubit $\ket{\alpha_j}$ from game~$j$ on her side.  Therefore, undoing the original change of basis restores the ancilla qubits nearly to their initial state $\ket{0^n}$, and $\tilde \S \approx \bar \S$.  

\def\AmeasBBdecj #1#2{#1{\F}^{AB}_{#2}}

Formally, define a unitary super-operator ${\cal V}_j$ that rotates the $j$th ancilla qubit to $\ket{\alpha_j}$, depending on Alice's transcript~$\hA{j}$.  Define a unitary super-operator ${\cal T}_j$ to apply ${\cal V}_j$ and swap the $j$th ancilla qubit with the qubit Alice uses in game~$j$ (depending on~$\hA{j-1}$).  Alice's multi-qubit ideal strategy is given~by 
\begin{equation}
\EAdecj{\bar}{j} = {\cal T}_{1,j-1}^{-1} (\identity_{\C^{2^n}} \otimes \EAdecj{\tilde}{j}) {\cal T}_{1,j-1}
 \enspace .
\end{equation}
We aim to show that the strategy given by $\rhoone$, $\{ \EAdecj{\bar}{j} \}$ and $\{ \EBdecj{\tilde}{j} \}$ is close to~$\tilde \S$ up to the fixed isometry that prepends $\ketbra{0^n}{0^n}$ to the state, i.e., that $\ketbra{0^n}{0^n} \otimes \EABdecj{\tilde}{1,n}(\rhoone) \approx \EAdecj{\bar}{1,n}\big( \ketbra{0^n}{0^n} \otimes \EBdecj{\tilde}{1,n} (\rhoone) \big)$.  Define a super-operator $\AmeasBBdecj{\tilde}{j}$, in which Alice's measurements are made on \emph{Bob's} Hilbert space~$\H_B$, on the qubit determined by Bob's transcript $\hB{j-1}$.  Since most games are $\epsilon$-structured, by the CHSH rigidity theorem, $\AmeasBBdecj{\tilde}{j+1,k}(\rhodecj{\tilde}{j+1}) \approx \EABdecj{\tilde}{j+1,k}(\rhodecj{\tilde}{j+1}) = \rhodecj{\tilde}{k+1}$ for $j \leq k$.  Since $\AmeasBBdecj{\tilde}{j+1,k}$ acts on $\H_B$, it does not affect Alice's qubit $\ket{\alpha_j}$ from game~$j$ at all, and so this qubit must stay~near $\ket{\alpha_j}$ in $\rhodecj{\tilde}{k+1}$ as well, i.e., the trace of the reduced density matrix against the projection $\ketbra{\alpha_j}{\alpha_j}$ stays close to one.  As this holds for every~$j$, ${\cal T}_{1,n}^{-1}$ indeed returns the ancillas almost to their~initial state~$\ket{0^n}$.  The $\{ \EBdecj{\tilde}{j} \}$ are symmetrically adjusted to $\{ \EBdecj{\bar}{j} \}$.  

3. $\bar \S \approx \hat \S$: In $\bar \S$, Alice and Bob play according to a strategy in which every game uses a qubit in tensor product with the previous games' qubits.  However, the qubit's location can depend on previous games' outcomes.  We wish to argue that Alice and Bob must play using a single set of $n$ qubits, fixed in advance independent of the transcript.  

Intuitively, if the location of Alice's $j$th qubit depended on $\hA{j-1}$, then since the devices cannot communicate with each other, Bob could not know which of his qubits to measure.  However, Alice and Bob's transcripts are significantly correlated, and we must show that they cannot use these correlations to coordinate dynamically the locations of their qubits. 

For a toy example that illustrates the issue, consider two devices who play the first $n-1$ games honestly and which at the beginning of the last game share two EPR states, $\ket{\varphi}^{\otimes 2}$.  Say that for certain functions~$f$ and~$g$, Alice uses EPR state $f(\hA{n-1}) \in \{1,2\}$ in game~$n$, and Bob uses EPR state $g(\hB{n-1}) \in \{1,2\}$.  For game~$n$ to be structured, they need $f(\hA{n-1}) = g(\hB{n-1})$ so that they measure the same EPR state.  Now Alice and Bob's local transcripts are each uniformly random, separately, but they have a constant correlation in every game coordinate.  It is straightforward to argue based on coordinate influence that if $\Pr[f(H^A_{n-1}) \neq g(H^B_{n-1})]$ is small, then~$f$ and~$g$ must both be nearly constant.  Thus one of the two EPR states is used almost always.  

This example gives an essentially classical cheating strategy.  The actual devices may be significantly more sophisticated.  In particular, small amounts of cheating in earlier games might enable an avalanche of more and more blatant cheating in later games, drastically changing the underlying quantum state.  If, for example, Alice knowingly manages to swap her halves of the two last EPR states along some transcripts~$\hA{n-1}$, then she can use completely different strategies for the last game without having to coordinate with Bob.  We control such errors, like in the arguments sketched above, by approximating Alice's super-operator as acting on Bob's Hilbert space.   More formal arguments are deferred to the supplementary material.  

\smallskip

The conclusion that the devices' joint strategy is close to ideal, $\EABj{1,n}(\rhoone) \approx \EABdecj{\hat}{1,n}(\rhodecone{\hat})$, is not strong enough for our applications, in which sometimes Eve plays CHSH games with only one of the two devices.  We need to show that the devices' strategies are \emph{separately} close to ideal, i.e., $\EAj{1,n}(\rhoone) \approx \EAdecj{\hat}{1,n}(\rhodecone{\hat})$ and $\EBj{1,n}(\rhoone) \approx \EBdecj{\hat}{1,n}(\rhodecone{\hat})$.  These estimates cannot be obtained directly because our main assumption, that every game~$j$ is usually $\epsilon$-structured, is only of use if both devices have played games~$1$ through~$j-1$---it gives information about $\EXj{j}$ applied to $\EABj{1,j-1}(\rhoone)$, not about~$\EXj{j}$ applied to $\EXj{1,j-1}(\rhoone)$.  The key idea to obtain separate estimates is that applying both devices' super-operators is almost equivalent to applying Alice's super-operator, \emph{guessing} Bob's measurement outcome from the ideal conditional distribution, and based on the guess applying a controlled unitary correction to his qubit.  Since Alice's super-operator collapses both qubits of the EPR state, it is not actually necessary to measure Bob's qubit.  Defining $\Bguessj{j}$ to be this guess-and-correct super-operator, two hybrid arguments give $\EABj{1,n}(\rhoone) \approx \Bguessj{1,n} \EAj{1,n}(\rhoone)$ and $\EAdecj{\tilde}{1,n} \EBj{1,n}(\rhoone) \approx \Bguessj{1,n} \EAdecj{\tilde}{1,n}(\rhoone)$.  Thus, 
\begin{equation*}
\Bguessj{1,n} \EAj{1,n}(\rhoone) \approx \Bguessj{1,n} \EAdecj{\tilde}{1,n}(\rhoone)
 \enspace .
\end{equation*}
The same super-operator $\Bguessj{1,n}$ appears on both the left- and right-hand sides above.  In general, applying a super-operator can reduce the trace distance.  In this case, however, it does not; the correction part of $\Bguessj{1,n}$ is unitary, and the guessing part is a stochastic map acting on a {copy} of Alice's classical transcript register.  Therefore, indeed $\EAj{1,n}(\rhoone) \approx \EAdecj{\tilde}{1,n}(\rhoone)$.  The third step of the proof uses a similar, but more involved, argument.

%\vspace{-.08in}
\subsubsection*{Verified quantum dynamics} \label{s:tomographysketch} 
%\vspace{-.06in}

\begin{figure*}[t]
\centering
\subfigure[\label{f:chshgames} CHSH games]{$\!\!\!\!\!\!$\includegraphics[scale=.5]{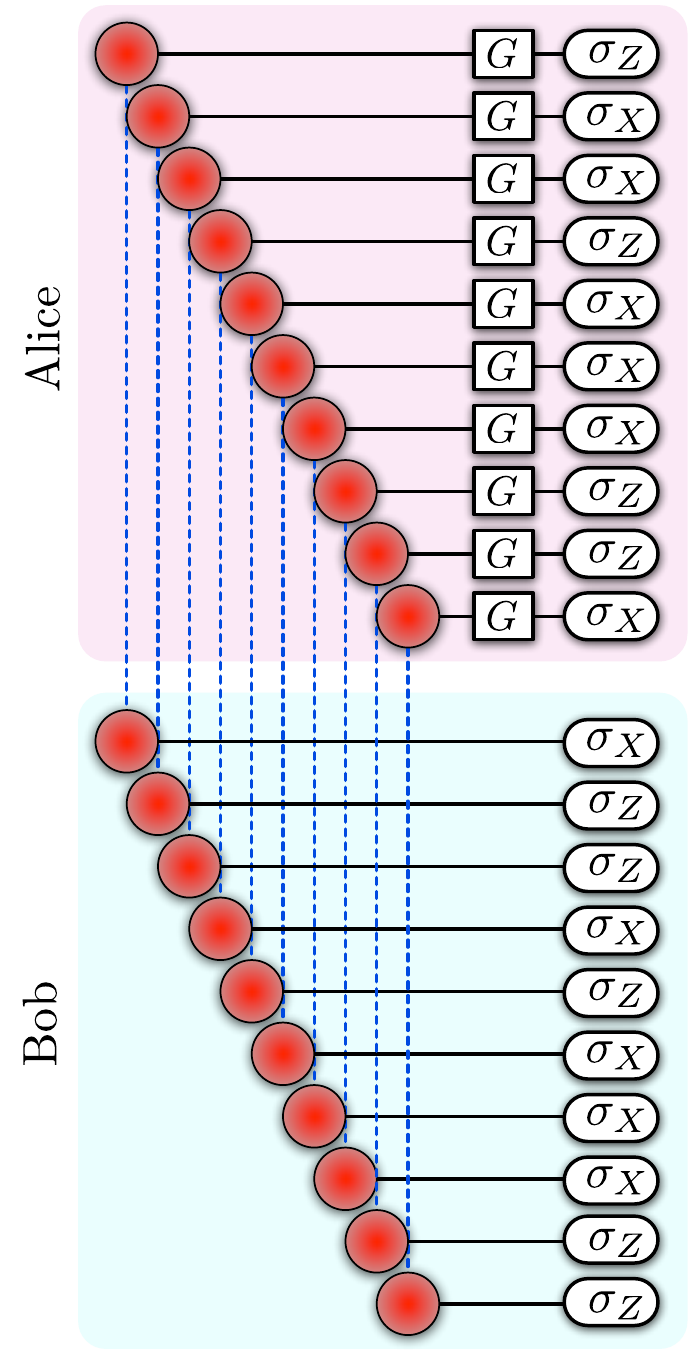}} 
$\,$
\subfigure[\label{f:statetomography} State tomography]{\includegraphics[scale=.5]{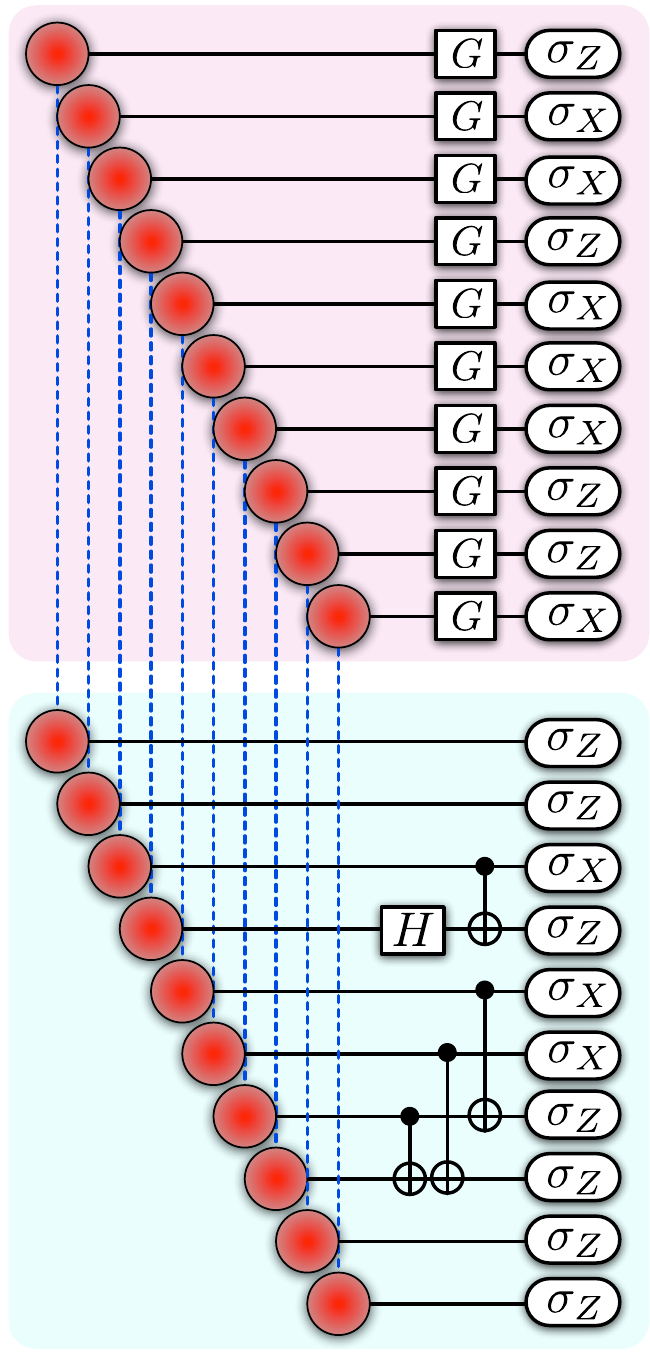}}
$\,$
\subfigure[\label{f:processtomography} Process tomography]{\includegraphics[scale=.5]{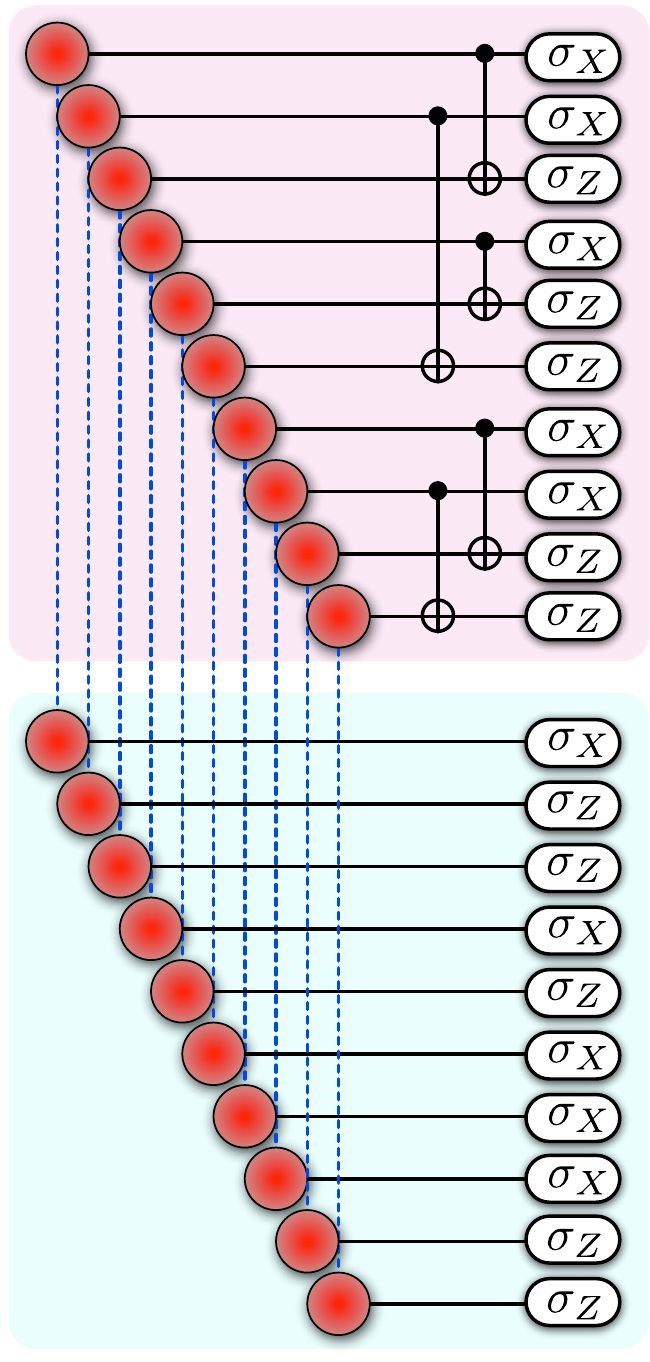}}
$\,$
\subfigure[\label{f:computation} Computation]{\includegraphics[scale=.5]{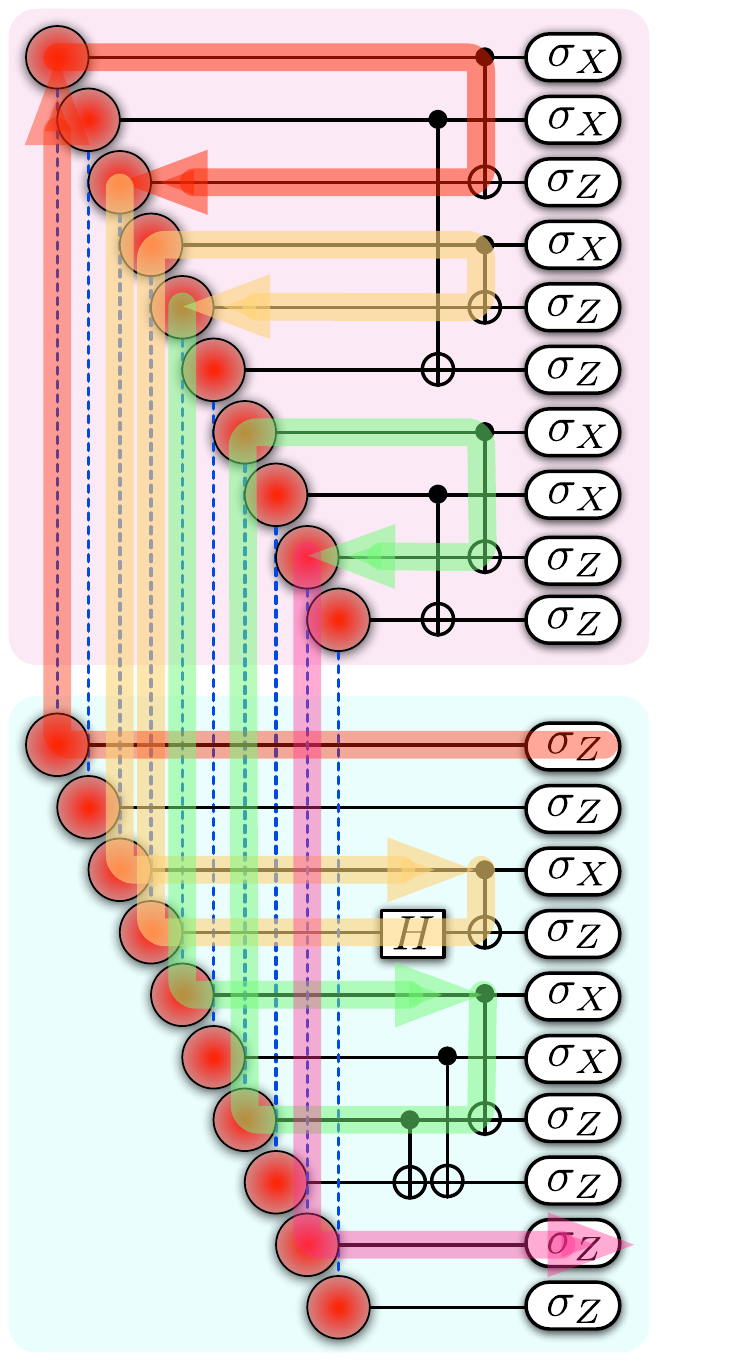}$\!\!\!\!\!\!\!\!$}
\raisebox{2.53cm}{\begin{tabular}{c}
\subfigure[\label{f:alicebob} Circuit $\mathcal C$]{\raisebox{.5cm}{\includegraphics[scale=.7]{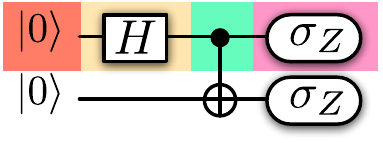}}} \\[.6cm]
\subfigure[\label{f:teleportation} Teleporting into $H$]{\raisebox{.7cm}{$\!\!\!\!\!\!\!$\includegraphics[scale=.7]{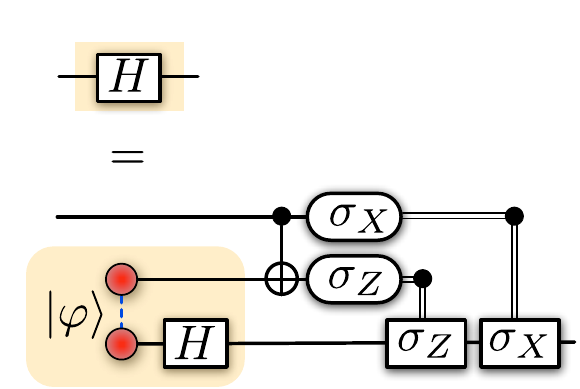}$\!\!\!\!\!\!\!\!\!\!\!\!$}}
\end{tabular}}
\caption{\small
{\bf Sub-protocols for verified quantum dynamics.}  
To delegate a quantum computation, Eve runs a random one of four sub-protocols with Alice and Bob.  {\bf a}, Playing many CHSH games ensures that the devices play honestly using shared EPR states.  {\bf b-c}, This lets Eve apply state or process tomography to characterize more complicated multi-qubit operations.  {\bf d-e}, By adaptively combining these operations, Eve directs a quantum circuit~$\mathcal C$.  The zig-zagging logical path of the first qubit of~$\mathcal C$ is highlighted.  {\bf f}, Each gate of $\mathcal C$ is implemented through teleportation; in this simpler example, $H$ is applied by a Bell measurement on half of the resource state $(I \otimes H) \ket \varphi$.    
} \label{f:computationprotocols}
\end{figure*}

Our scheme for verified quantum dynamics is based on the~idea of computation by teleportation, which reduces computation to preparing certain resource states and applying Bell measurements (\figref{f:teleportation})~\cite{GottesmanChuang99teleportation}.  Say that Eve wants to~simulate a quantum circuit~$\mathcal C$, over the gate set $\{ H, G, \mathrm{CNOT} \}$,~where $H$ is the Hadamard gate and $G = \exp(-i \frac{\pi}{8} \sigma_y)$.  Eve asks Bob to prepare many copies of the state $\ket 0 \otimes (I \otimes H) \ket{\varphi} \otimes (I \otimes \phasegate) \ket{\varphi} \otimes \mathrm{CNOT}_{2,4} (\ket{\varphi} \otimes \ket{\varphi})$.  He can do so by applying one-, two- and four-qubit measurements to his halves of the~shared EPR states and reporting the results to Eve.  If he plays honestly, Alice's shares of the EPR states collapse into the desired resource states, up to simple corrections.  Each resource state corresponds to a basic operation in~$\mathcal C$.  Eve wires these~up by repeatedly directing Alice to make a Bell measurement connecting the output of one operation to the input of the next~operation in~$\mathcal C$.  After each $G$ gate, an $H$ correction might be required.  

Of course, Alice and Bob might not follow directions.  To enforce honest play, Eve runs this protocol only a small fraction of the time, and otherwise chooses uniformly between three alternative protocols sketched in \figref{f:computationprotocols}.  Let $m = {\abs C}^{O(1)}$ and $n = m^{O(1)}$.  

1. In the ``state tomography" protocol, Eve chooses $K$ uniformly from $\{1, \ldots, n/m\}$.  She referees $(K-1) m$ CHSH games.  Then in the $K$th block of~$m$, Eve asks Bob to prepare the resource states, in a random order, while continuing to play CHSH games with Alice.  Eve rejects if the tomography statistics are inconsistent.  We prove that if Alice plays honestly and Eve accepts with high probability, then on most randomly chosen small subsets of the resource state positions, Alice's reduced state is close to the correct tensor product of resource states.  

2. In the ``process tomography" protocol, Eve again chooses $K$ uniformly from $\{1, \ldots, n/m\}$ and referees $(K-1) m$ CHSH games.  In the $K$th block of~$m$, Eve asks Alice to make Bell measurements on random pairs of qubits, while continuing to play CHSH games with Bob.  If Alice's reported result for \emph{any} pair of qubits is inconsistent with Bob's outcomes, Eve rejects.  Then if Bob plays honestly and Eve accepts with high probability, Alice must also have applied the Bell measurements~honestly.  

3. In the third protocol, Eve simply referees $n$ sequential CHSH games with both devices and rejects if they do not win at least $(1 - \epsilon) \omega^* n$ games.   

From Bob's perspective the process tomography and computation protocols are indistinguishable, as are the state tomography and CHSH game protocols.  From Alice's perspective, the state tomography and computation protocols are indistinguishable, as are the process tomography and CHSH game protocols.  The devices must behave identically in indistinguishable protocols.  The multi-game rigidity theorem therefore provides the base for a chain of implications that implies that if Eve accepts with high probability, then the devices must implement $\mathcal C$ honestly.  

Four main technical problems obstruct these claims.  

First, in the state tomography protocol, if Bob is dishonest, then Alice gets an arbitrary $m$-qubit state, and there is no reason why it should split into a tensor product of repeated, constant-qubit states.  Nonetheless, we argue using martingales that if the counts of Alice's different measurement outcomes roughly match  their expectations with high probability, then for most reported measurement outcomes from Bob and for most subsystems~$j$, Alice's conditional state reduced to her $j$th subsystem is close to what it should be.  

Furthermore, saturating Tsirelson's inequality for the CHSH game only implies that Alice is honestly making Pauli $\sigma_x$ and~$\sigma_z$ measurements on her half of an EPR state.  Tomography also requires $\sigma_y$ measurements.  To sidestep this issue, we generalize a theory introduced by McKague~\cite{McKague10thesis} and prove that there is a large class of states, including the necessary resource states, that are all robustly determined by only $\sigma_x$ and~$\sigma_z$ measurements.  

A bigger problem, though, is that we want to characterize the operations that the devices apply to their shared EPR states, and not just the states that these operations create on the other side.  The distinction is the same as that between process and state tomography.  Essentially, the problem is that the correct states could be generated by incorrect processes.  Moreover, as for sequential CHSH games, Bob's strategy in early tomography rounds might be sufficiently dishonest as to allow him in later rounds to apply completely dishonest operators.  A key observation to avoid this problem is that it is enough to certify the states prepared by one device and the processes applied by the other.  Then since a broad class of states can be certified, for applications it suffices to certify a much smaller set of operations.  We restrict consideration to Pauli stabilizer measurements~\cite{Gottesman97thesis}.  For Pauli operators in the stabilizer of a state, the measurement outcome is deterministic.  Therefore if Alice reports the wrong stabilizer syndrome in even a single round, Eve can reject.  Our process certification analysis is similar to some of the arguments used above.  We argue that Alice's earlier measurements cannot usually overly disturb the qubits intended for use in later measurements, by pulling Alice's measurement super-operators over onto Bob's halves of the EPR states.  

Finally, the verifier's questions in the state and process tomography protocols are non-adaptive, whereas in computation by teleportation the questions must be chosen adaptively based on previous responses.  This is an attack vector in some related protocols.  However, we argue that the devices can learn nothing from the adaptive questions.  This follows because computation by teleportation can be implemented exactly equivalently either by choosing Bob's state preparation questions non-adaptively and Alice's process questions adaptively, or vice versa.  

The proof that $\QMIP = \MIP^*$ follows along similar lines.  Begin with a $k$-prover protocol.  We may assume that it has two rounds of quantum messages from the provers, before and after the verifier broadcasts a random bit~\cite{KempeKobayashiMatsumotoVidick07qmip}.  To convert to an MIP$^*$ protocol, with classical messages, add two additional provers, Alice and Bob.  Eve teleports the original $k$ provers' messages to Alice, and directs Alice and Bob together to apply the quantum verifier's acceptance predicate.

%\vspace{-.08in}
\subsubsection*{Discussion}
%\vspace{-.06in}

By characterizing the device strategies that can win many successive CHSH games, we have shown how a fully classical party can direct the actions of two untrusted quantum devices.  The simplest case is device-independent quantum key distribution, free of the independence assumptions needed in previous analyses.  A major open problem is to improve the efficiency of our schemes, so as to get a key rate that is a positive constant instead of inverse polynomially small.  More generally, the CHSH game rigidity theorems have no classical analog; and further implications for cryptographic primitives and protocols with drastically reduced security assumptions remain to be explored.

%auto-ignore
\ifx\compilefullpaper\undefined  
\documentclass[twocolumn, 9pt]{extarticle}
\usepackage[superscript]{cite}
\usepackage{enumitem}

% This is code to compress the bibliography slightly.  
\let\oldthebibliography=\thebibliography
\let\endoldthebibliography=\endthebibliography
\renewenvironment{thebibliography}[1]{%
  \begin{oldthebibliography}{#1}%
    \setlength{\parskip}{0ex}%
    \setlength{\itemsep}{0ex}%
}%
{%
  \end{oldthebibliography}%
}

\usepackage[letterpaper]{geometry}		\geometry{includefoot,verbose,nohead,tmargin=.7in,bmargin=.7in,lmargin=.7in,rmargin=.7in}

\begin{document}

\title{\vspace{-.45in} \Huge Classical command of quantum systems via rigidity of CHSH games: \\ Supplementary Information}
\author{Ben W.~Reichardt \and Falk Unger \and Umesh Vazirani}
\date{}

\maketitle
\else 
\appendix
\fi

\section{Rigidity for the CHSH game}

%Let us
In this appendix, we sketch the proof of the single-game CHSH rigidity theorem for the case that the devices' Hilbert spaces $\H_A$ and $\H_B$ are finite dimensional, and $\epsilon = 0$, i.e., the devices achieve the maximum correlation allowed by Tsirelson's inequality.  Hilbert space completeness allows for truncating an infinite-dimensional space to finitely many dimensions, at an arbitrarily small~cost.  

A general strategy for the devices consists of measuring some shared state in $\H_A \otimes \H_B$.  Since the success probability is extremal, i.e., $\epsilon = 0$, we may assume that the state is extremal, i.e., is pure.  Each device measures its system using a reflection that depends on Eve's question, and returns the sign of the observed eigenvalue $\pm 1$.  The shared state $\ket \psi$ and Alice and Bob's four reflections $\RAa{0}$, $\RAa{1}$, $\RBa{0}$ and $\RBa{1}$ determine the strategy.  

Jordan's Lemma~\cite{Jordan75projections} states that any two reflections acting on a finite-dimensional space can be simultaneously block-diagonalized into $1 \times 1$ and $2 \times 2$ blocks.  (The same statement is false for infinite-dimensional spaces.)  Apply the lemma to $\RAa{0}$ and $\RAa{1}$, to obtain 
\begin{equation} \label{e:RAablockdiagonalization}
\RAa{a} 
= \bigoplus_i \RAa{a}(i)
 \enspace ,
\end{equation}
where $i$ labels the block index, and each $\RAa{a}(i)$ is a $1 \times 1$ or $2 \times 2$ reflection.  By adding placeholder dimensions, we may assume without loss of generality that each block is $2 \times 2$.  Eq.~\eqnref{e:RAablockdiagonalization}, which can equivalently be rewritten $\RAa{a} = \sum_i \RAa{a}(i) \otimes \ketbra i i$, gives a basis in which $\H_A = \C^2 \otimes \H_A'$, where $\H_A'$ is the Hilbert space with orthonormal basis $\{ \ket i \}$.  Thus Jordan's Lemma gives (an extension of) the a~priori formless space $\H_A$ a tensor-product structure.  It locates within $\H_A$ a qubit $\C^2$.  However, Alice's operators need not act locally on this qubit; a controlled rotation is still needed to align her operators.  After this rotation, we will show that Alice's strategy is close to the ideal CHSH game strategy that measures this qubit using the operators given in \figref{f:chsh}.  

Since the measurement of the block index commutes with both $\RAa{a}$, we may assume that Alice measures the block index first.  Thus we reduce to the case that $\H_A = \C^2$, and, by a symmetrical argument, that $\H_B = \C^2$, still with $\epsilon = 0$.  

If the $\RXa{\alpha}$ reflections act on $\C^2$ and are not equal to $\pm I$, then we can choose a basis such that $\RXa{0} = \sigma_z = \smatrx{1&0\\0&-1}$, $\RAa{1} = \smatrx{\cos 2 \theta & \sin 2 \theta \\ \sin 2 \theta & -\cos 2 \theta}$ and $\RBa{1} = \smatrx{\cos 2 \theta^{\smash{\prime}} & \sin 2 \theta^{\smash{\prime}} \\ \sin 2 \theta^{\smash{\prime}} & -\cos 2 \theta^{\smash{\prime}}}$ for certain angles $\theta, \theta' \in [0, \frac\pi2]$.  (As this basis depends on the block index, changing basis amounts to a controlled qubit rotation.)  Letting $M_a = \frac12 (\RAa{0} + (-1)^a \RAa{1}) \otimes I - \frac{1}{\sqrt 2} I \otimes \RBa{a}$ for $a \in \{0,1\}$, the success probability satisfies 
\begin{equation*}\begin{split}
2 \sqrt 2 - 8 \epsilon &\leq 8 \Pr[A B = X \oplus Y] - 4 \\
&= \bra \psi \Big( \sum_{a, b \in \{0,1\}} (-1)^{a b} \RAa{a} \otimes \RBa{b} \Big) \ket \psi \\
&= 2 \sqrt 2 - \sqrt 2 \bra \psi (M_0^2 + M_1^2) \ket \psi
 \enspace .
\end{split}\end{equation*}
For $\epsilon = 0$, this means that $\ket \psi$ must lie in the intersection of the kernels of $M_0$ and~$M_1$.  The four eigenvalues of $M_0$ are $\pm \cos \theta \pm \frac{1}{\sqrt 2}$.  For the kernel to be nonempty, it must be that $\theta = \frac\pi4$.  A symmetrical argument implies that $\theta' = \frac\pi4$.  For small $\epsilon > 0$, $\ket \psi$ must lie close to small-eigenvalue subspaces of both $M_0$ and $M_1$, implying that $\theta$ and~$\theta'$ are close to $\frac\pi4$.  Thus the measurement operators are rigidly determined.  

For $\theta = \theta' = \frac\pi4$, the kernel of $\sqrt 2 ((H G) \otimes I) M_0 ((G^\dagger H) \otimes I) = \sigma_z \otimes I - I \otimes \sigma_z$ is spanned by the vectors $\ket{00}$ and $\ket{11}$.  The kernel of $\sqrt 2 ((H G) \otimes I) M_1 ((G^\dagger H) \otimes I) = \sigma_x \otimes I - I \otimes \sigma_x$ is spanned by the vectors $\ket{+} \otimes \ket{+} = \frac12 (\ket{00} + \ket{01} + \ket{10} + \ket{11})$ and $\ket{-} \otimes \ket{-} = \frac12 (\ket{00} - \ket{01} - \ket{10} + \ket{11})$.  For the $\ket{01}$ and $\ket{10}$ terms to cancel out, a linear combination of these vectors must have equal coefficients.  The intersection between the two kernels is therefore spanned by $\ket{00} + \ket{11}$.  Thus the state $\ket \psi$ is rigidly determined.  

The above argument conveys much of the intuition for the CHSH rigidity theorem.  The case $\epsilon > 0$ can be handled by maintaining suitable approximations.  However, we have not explained the derivation of the operators $M_0$ and~$M_1$, chosen to satisfy $\sum_{a, b \in \{0,1\}} (-1)^{a b} \RAa{a} \otimes \RBa{b} = 2 \sqrt 2 I \otimes I - \sqrt 2 (M_0^2 + M_1^2)$.  In general, for a game in which Eve draws her questions from the distribution $p(a,b)$ and accepts if $x \oplus y = V(a,b)$, let $\Theta = \sum_{a,b} p(a,b) (-1)^{V(a,b)} \ketbra a b$ and $\hat \Theta = \smatrx{0&\Theta\\\Theta^\dagger&0}$.  Let $\omega^*$ be the optimal success probability.  By the Tsirelson semi-definite program~\cite{CleveSlofstraUngerUpadhyay06parallelXOR}, the optimal bias is $2 \omega^* - 1 = \frac12 \max_{\Gamma \succeq 0, \Gamma \circ I = I} \langle \hat \Theta, \Gamma \rangle = \frac12 \min_{\Delta = \Delta \circ I \succeq \hat \Theta} \Tr \Delta$.  $\Gamma$ is the Gram matrix of the vectors $\RAa{a} \ket \psi$ and $\RBa{b} \ket \psi$.  Letting $\Delta^*$ achieve the second optimum, we have $\frac12 \langle \hat \Theta, \Gamma \rangle = (2\omega^*-1) - \frac12 \langle \Delta^* - \hat \Theta, \Gamma \rangle$.  For the CHSH game, $\Delta^* = \frac{1}{2 \sqrt 2} \identity$, and the matrices $M_0, M_1$ correspond to eigenvectors of $\Delta^* - \hat \Theta$.

\section{Rigidity for sequential CHSH games}

For sequentially repeated CHSH games, our goal is to locate in $\H_A$ and $\H_B$ not one but many qubits, in tensor product, such that the devices' actual strategy is close to the ideal strategy that measures these qubits one at a time in sequence.  In this section, we will introduce the notation and put together the theorems needed to prove rigidity for sequential CHSH games.

\subsection{Notation}

To make our claims precise, we begin with some notation for CHSH games played in sequence, one following the next, with no communication between games.  

A strategy $\S$ for Alice and Bob to play $n$ sequential CHSH games consists of the devices' Hilbert spaces, initial shared state and the reflections they use to play each game.  

\begin{description}[leftmargin=.1cm, style=sameline]
\addtolength{\itemsep}{-0.25\baselineskip}
\item[Initial state:]
Let $\H_A$ and $\H_B$ be Alice and Bob's respective Hilbert spaces, and $\H_C$ any external space.  Let $\ket{\psione} \in \H_A \otimes \H_B \otimes \H_C$ be the devices' initial shared state.  

\item[Transcripts:] 
Denote questions asked to Alice by $a_1, \ldots, a_n$, questions asked to Bob by $b_1, \ldots, b_n$, and possible answers by $x_1, \ldots, x_n$ and $y_1, \ldots, y_n$, respectively.  Write $\hA{j} = (a_1, x_1, \ldots, a_j, x_j)$, $\hB{j} = (b_1, y_1, \ldots, b_j, y_j)$ and $\h{j} = (\hA{j}, \hB{j})$, a full transcript for games $1$ through~$j$.  Write $\h{j,k}$ and $\hX{j,k}$ for the full or partial transcripts for games~$j$ through~$k$, inclusive.  

\item[Reflections:] 
In game $j$, for questions $a_j$ and $b_j$, let~$\RAjah{j}{a_j}{\hA{j-1}}$ and $\RBjah{j}{b_j}{\hB{j-1}}$ be the reflections specifying Alice and Bob's respective strategies for game~$j$, depending on the previous games' transcript.  Define projections~$\PAjh{j}{\hA{j}} = \tfrac12 (\identity + (-1)^{x_j} \RAjah{j}{a_j}{\hA{j-1}})$ and $\PBjh{j}{\hB{j}} = \tfrac12 (\identity + (-1)^{y_j} \RBjah{j}{b_j}{\hB{j-1}})$.  For $\device \in \{A, B\}$ and $j \leq k$, let $\PXjh{j,k}{\hX{k}} = \PXjh{k}{\hX{k}} \cdots$ $\PXjh{j+1}{\hX{j+1}} \PXjh{j}{\hX{j}}$.  Let $\PABjh{j, k}{\h{k}} = \PAjh{j, k}{\hA{k}} \otimes \PBjh{j, k}{\hB{k}}$.  

\item[Super-operators:]
For $j < k$ and partial transcript~$\h{j}$, define super-operators~$\EAjh{k}{\hA{j}}$ and~$\EBjh{k}{\hB{j}}$ by 
\begin{equation}\begin{split}
\EXjh{k}{\hX{j}} &(\ketbra{\hX{j+1,k-1}}{\hX{j+1,k-1}} \otimes \rho) \\
&= \frac12 \sum_{\alpha_k, \chi_k} \ketbra{\hX{j+1,k}}{\hX{j+1,k}} \otimes \PXjh{k}{\hX{k}} \rho \PXjh{k}{\hX{k}} 
 \enspace ,
\end{split}\end{equation}
where $\hX{k} = (\hX{k-1}, \alpha_k, \chi_k)$.  These super-operators capture the effects of Alice and Bob playing game~$k$, where games~$j+1$ to~$k-1$ of the transcript are stored in a separate register.  For $\ell \geq k$, let $\EXjh{k, \ell}{\hX{j}} = \EXjh{\ell}{\hX{j}} \cdots \EXjh{k+1}{\hX{j}} \EXjh{k}{\hX{j}}$ and $\EABjh{k, \ell}{\h{j}} = \EAjh{k, \ell}{\hA{j}} \otimes \EBjh{k, \ell}{\hB{j}}$.  

Let $\rhoone = \ketbra{\psione}{\psione}$, $\rhoj{j}(\h{j-1}) = \PABjh{1,j-1}{\h{j-1}} \rhoone \PABjh{1,j-1}{\h{j-1}}^\dagger$ $/ \Tr (\PABjh{1,j-1}{\h{j-1}} \rhoone)$ be the state at the beginning of game~$j$ conditioned on $\h{j-1}$, and $\rhoj{j} = \EABj{1,j-1}(\rhoone) = \frac{1}{4^{j-1}} \sum_{\h{j-1}} \ketbra{\h{j-1}}{\h{j-1}} \otimes \PABjh{1,j-1}{\h{j-1}} \rhoone \PABjh{1,j-1}{\h{j-1}}^\dagger\!$.  Compare to Eq.~\eqnref{e:examplecombinedtranscriptstatedensitymatrix} in the main text.  
\end{description}

For fixed Hilbert spaces, a strategy $\S$ can be identified with the tuple $(\rhoone, \{ \EAj{j} \}, \{ \EBj{j} \})$.  When considering multiple strategies, we will decorate this notation to indicate the corresponding strategy, e.g., $\tilde \S = (\rhodecone{\tilde}, \{ \EAdecj{\tilde}{j} \}, \{ \EBdecj{\tilde}{j} \})$.  

\smallskip

Call a strategy for a single CHSH game $\epsilon$-structured if the probability of winning is at least $\omega^* - \epsilon/8$.  In our theorem, we will assume that most games the devices play are $\epsilon$-structured, in the following sense: 

\begin{definition}[Structured strategy] \label{t:structuredstrategydef}
A sequential CHSH game strategy~$\S$ is \emph{$(\delta, \epsilon)$-structured} if for every~$j$, there is at least a $1-\delta$ probability over transcripts $\h{j-1}$ that game~$(j, \h{j-1})$ is $\epsilon$-structured.  $\S$ is \emph{$\epsilon$-structured} if it is $(\epsilon, \epsilon)$-structured.  
\end{definition}

We aim to show that the devices play nearly ideally: 

\begin{definition}[Ideal strategy] \label{t:idealstrategy}
A strategy~$\S$ for $n$ sequential CHSH games is an \emph{ideal strategy} if there exist isometries $\UidealX: \H_\device \hookrightarrow (\C^2)^{\otimes n} \otimes \H_\device'$ and a state $\ket{\psione'} \in \H_A' \otimes \H_B' \otimes \H_C$ such that for every $j$ and $\h{j-1}$, 
\begin{equation}\begin{split} \label{e:idealstrategy}
\UidealA \otimes \UidealB \ket{\psione} &= \ket{\varphi}^{\otimes n} \otimes \ket{\psione'} \\
\RXjah{j}{\alpha}{\hX{j-1}} &= \UidealX{}^\dagger (\RXa{\alpha})_j \UidealX 
 \enspace ,
\end{split}\end{equation}
where $(\RXa{\alpha})_j$ denotes the ideal reflection operator~$\RXa{\alpha}$ from \figref{f:chsh} in the main text acting on the $j$th qubit.  
\end{definition}

In order to compare strategies, define a notion of simulation: 

\begin{definition}[Strategy simulation] \label{t:simulationdef}
Let $\S$ and $\tilde \S$ be two strategies for playing $n$ sequential CHSH games.  For $\epsilon \geq 0$, we say that strategy $\tilde \S$ \emph{$\epsilon$-simulates} strategy~$\S$ if they both use the same Hilbert spaces and for all~$j$, 
\begin{equation}
\max_{\device \in \{A, B\}} \trnorm{\EXj{1,j}(\rhoone) -  \EXdecj{\tilde}{1,j}(\rhodecone{\tilde})} \leq \epsilon
 \enspace .
\end{equation}
Say that $\tilde \S$ \emph{weakly $\epsilon$-simulates}~$\S$ if only the weaker inequality $\trnorm{ \EABj{1,j}(\rhoone) - \EABdecj{\tilde}{1,j}(\rhodecone{\tilde}) } \leq 2 \epsilon$ holds.  
\end{definition}

It is also convenient to allow a basis change by local unitaries or local isometries: 

\begin{definition} \label{t:isometricextensiondef}
A strategy $\tilde \S$ is an \emph{isometric extension} of $\S$ if there exist isometries $\XX: \H_\device \hookrightarrow \tilde \H_\device$ such that $\ket{\tilde \psione} = \XA \otimes \XB \ket{\psione}$ and $\XX \RXjah{j}{\alpha}{\hX{j-1}} = \RXdecjah{\tilde}{j}{\alpha}{\hX{j-1}} \XX$ always.  (Thus $\XX \EXj{j} = \EXdecj{\tilde}{j} \XX$.)  
\end{definition}

\subsection{Main rigidity theorem and proof outline}

Our main theorem states that a structured strategy can be closely simulated by an ideal strategy: 

\def\kappaEPR{\kappa}	% overriding the previous definition

\begin{theorem}[Rigidity theorem for sequential CHSH games] \label{t:sequentialCHSHgames}
There exists a constant~$\kappaEPR$ such that for any $\epsilon$-structured strategy $\S$ for $n$ sequential CHSH games, there exists an ideal strategy $\hat \S$ that $\kappaEPR n^{\kappaEPR} \epsilon^{1/\kappaEPR}$-simulates an isometric extension of~$\S$.  
\end{theorem}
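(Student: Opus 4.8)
The plan is to induct on the number of games~$n$, taking the single-game rigidity analysis sketched in the previous section as the atomic building block. Made quantitative, that analysis shows that for an $\epsilon$-structured single CHSH game, Jordan's Lemma together with the Tsirelson semidefinite-program argument locates a qubit $\C^2$ inside each device's space and produces local isometries under which the two reflections lie within $O(\sqrt{\epsilon})$ of the ideal reflections $\RXa{0},\RXa{1}$ and the shared state lies within $O(\sqrt{\epsilon})$ of $\ket{\varphi}$ tensored with a residual state; the root $\sqrt{\epsilon}$ enters because the kernel condition on $M_0,M_1$ forces the Jordan angles to within $O(\sqrt{\epsilon})$ of $\tfrac{\pi}{4}$. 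This is exactly an ideal single-game strategy $O(\sqrt{\epsilon})$-simulating an isometric extension of the game, so the base case $n=1$ holds.

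For the inductive step, I would peel off the first game. Because $\S$ is $\epsilon$-structured and the transcript $\h{0}$ is empty, game~$1$ is itself $\epsilon$-structured, so the single-game result supplies local isometries $W^A, W^B$ extracting a first qubit from each of $\H_A,\H_B$. I then define a modified strategy $\S'$ that plays game~$1$ by the \emph{ideal} measurement on the extracted qubit and plays games $2,\ldots,n$ with the original reflections, conjugated by $W^A\otimes W^B$ and acting on the residual factor $\H_A'\otimes\H_B'$. Since the game-$1$ reflections and state differ from the ideal ones by only $O(\sqrt{\epsilon})$, and since the transcript-storing super-operators $\EXj{k}$ are completely positive and trace preserving, hence contractive in the trace norm, substituting ideal game-$1$ play perturbs every marginal $\EXj{1,j}(\rhoone)$ by at most $\mathrm{poly}\cdot\sqrt{\epsilon}$; thus $\S'$ simulates an isometric extension of~$\S$ to within $\mathrm{poly}\cdot\sqrt{\epsilon}$ in the sense of \defref{t:simulationdef}.

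Viewing the residual state on $\H_A'\otimes\H_B'\otimes\H_C$ together with the residual reflections for games $2,\ldots,n$ as an $(n-1)$-game strategy $\S_{\mathrm{res}}$, the bookkeeping claim is that $\S_{\mathrm{res}}$ is still structured in the sense of \defref{t:structuredstrategydef}: substituting ideal play in game~$1$ moves each later game's conditional state by at most $O(\sqrt{\epsilon})$ in trace norm and hence its winning probability by at most $O(\sqrt{\epsilon})$, so the bad-transcript fraction grows only by an additive amount while the quality parameter stays controlled. Applying the inductive hypothesis to $\S_{\mathrm{res}}$ yields an ideal $(n-1)$-game strategy simulating it to within $\kappa(n-1)^{\kappa}\epsilon^{1/\kappa}$; composing $W^A\otimes W^B$ with the isometries that the inductive hypothesis produces on $\H_A'\otimes\H_B'$ gives the global isometries $\UidealA,\UidealB$ and the factorized state $\ket{\varphi}^{\otimes n}\otimes\ket{\psione'}$ demanded by \defref{t:idealstrategy}, and the two error contributions add.

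I expect the main obstacle to be forcing the root of $\epsilon$ \emph{not} to compound across the induction. A careless peel that degraded the structured quality parameter from $\epsilon$ to $\sqrt{\epsilon}$ at each step would nest the root to $\epsilon^{1/2^{n}}$, which is useless; the induction and the choice of error measure must instead be arranged so that each peel contributes only an \emph{additive} error at one fixed root, so that the total grows merely polynomially in~$n$ and yields the form $\kappaEPR n^{\kappaEPR}\epsilon^{1/\kappaEPR}$ after absorbing the polynomial prefactors and the various conversions between operator-norm and state-dependent norms into a single large $\kappa$. Two features must be checked to make this work: the \emph{per-device} simulation metric of \defref{t:simulationdef}, together with the transcript-storing super-operators, is what renders perturbations additive rather than multiplicative under sequential composition; and the extracted qubits must be shown to sit in mutual tensor product, i.e.\ extracting qubit~$1$ must neither disturb the structure the induction needs for the later qubits nor let the residual reflections reach back into qubit~$1$, which is precisely what makes the global isometries $\UidealX$ well defined.
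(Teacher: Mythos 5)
Your outline founders on two points that are not bookkeeping but the actual content of the theorem, and the proposed induction has no mechanism for either. First, after you extract the game-$1$ qubit via $W^A\otimes W^B$, the reflections for games $2,\ldots,n$ are still arbitrary operators on all of $\H_A\cong\C^2\otimes\H_A'$: nothing forces them to act only on the residual factor $\H_A'$, so ``the residual reflections, viewed as an $(n-1)$-game strategy on $\H_A'\otimes\H_B'\otimes\H_C$'' is simply not well defined, and the inductive hypothesis cannot be invoked. Establishing that later games' operators effectively do not reach back into the collapsed qubit is exactly what \thmref{t:multiqubitidealstrategysimulation} proves, and it requires a genuinely new idea beyond single-game rigidity: Alice's later measurements are approximately replaced by measurements on \emph{Bob's} side via the identity $(M\otimes I)(\ket{00}+\ket{11})=(I\otimes M^T)(\ket{00}+\ket{11})$, which manifestly leave her collapsed qubit untouched, and the Gentle Measurement Lemma converts this into the tensor-product structure. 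Second, even granting a residual strategy, it is conditioned on the game-$1$ transcript, so the inductive hypothesis hands you a \emph{different} ideal strategy---different qubit locations, different isometries---for each value of $(a_1,x_1,b_1,y_1)$. But \defref{t:idealstrategy} demands a single, transcript-independent isometry $\UidealX$ and a fixed state $\ket{\varphi}^{\otimes n}\otimes\ket{\psione'}$; merging the per-transcript isometries into one is the gluing problem solved by \thmref{t:globalgluing}, which your composition step silently assumes away.

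The error-compounding worry you raise in your last paragraph is in fact fatal to this architecture, not a detail to be arranged. Each peel replaces game $1$ by ideal play, perturbing the evolved states by $O(\sqrt\epsilon)$; by \lemref{t:simulationpreservesstructure} the residual strategy is then only $O(\sqrt{\epsilon})$-structured, so the next peel costs $O(\epsilon^{1/4})$, and $n$ nested peels give error of order $\epsilon^{1/2^{\Theta(n)}}$---precisely the uselessness you fear. There is no way to get ``one fixed root per peel'' inside a sequential induction, because the passage from simulation error back to structuredness (a Markov argument over transcripts) inherently takes a square root each time it is applied, and your scheme applies it $n$ times in series. The paper avoids this by not inducting on $n$ at all: it makes a \emph{constant} number of global passes---structured $\to$ single-qubit ideal (\thmref{t:singlequbitidealstrategysimulation}, a hybrid argument over all $n$ games at once), $\to$ multi-qubit ideal (\thmref{t:multiqubitidealstrategysimulation}), $\to$ ideal (\thmref{t:globalgluing})---each pass treating all $n$ games together with additive $\poly(n)$ error, so that \lemref{t:simulationpreservesstructure} is invoked only a constant number of times. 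That constant-depth structure, not any per-game bookkeeping, is what produces the final form $\kappaEPR n^{\kappaEPR}\epsilon^{1/\kappaEPR}$.
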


\begin{figure*}
\centering
\def\qubitsballscale{.28}
\subfigure[General strategy]{\includegraphics[scale=\qubitsballscale]{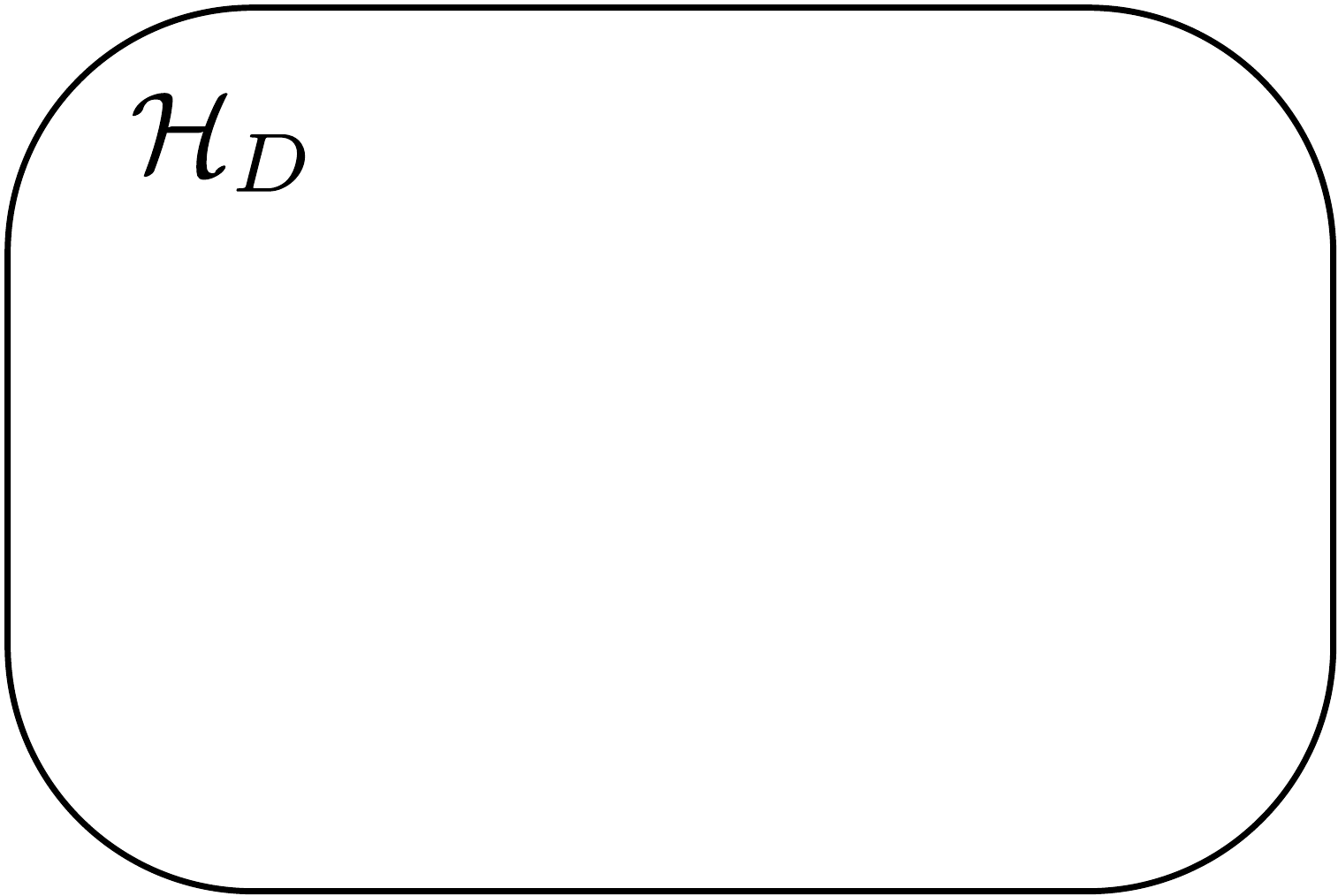}}
\subfigure[Single-qubit ideal strategy]{\includegraphics[scale=\qubitsballscale]{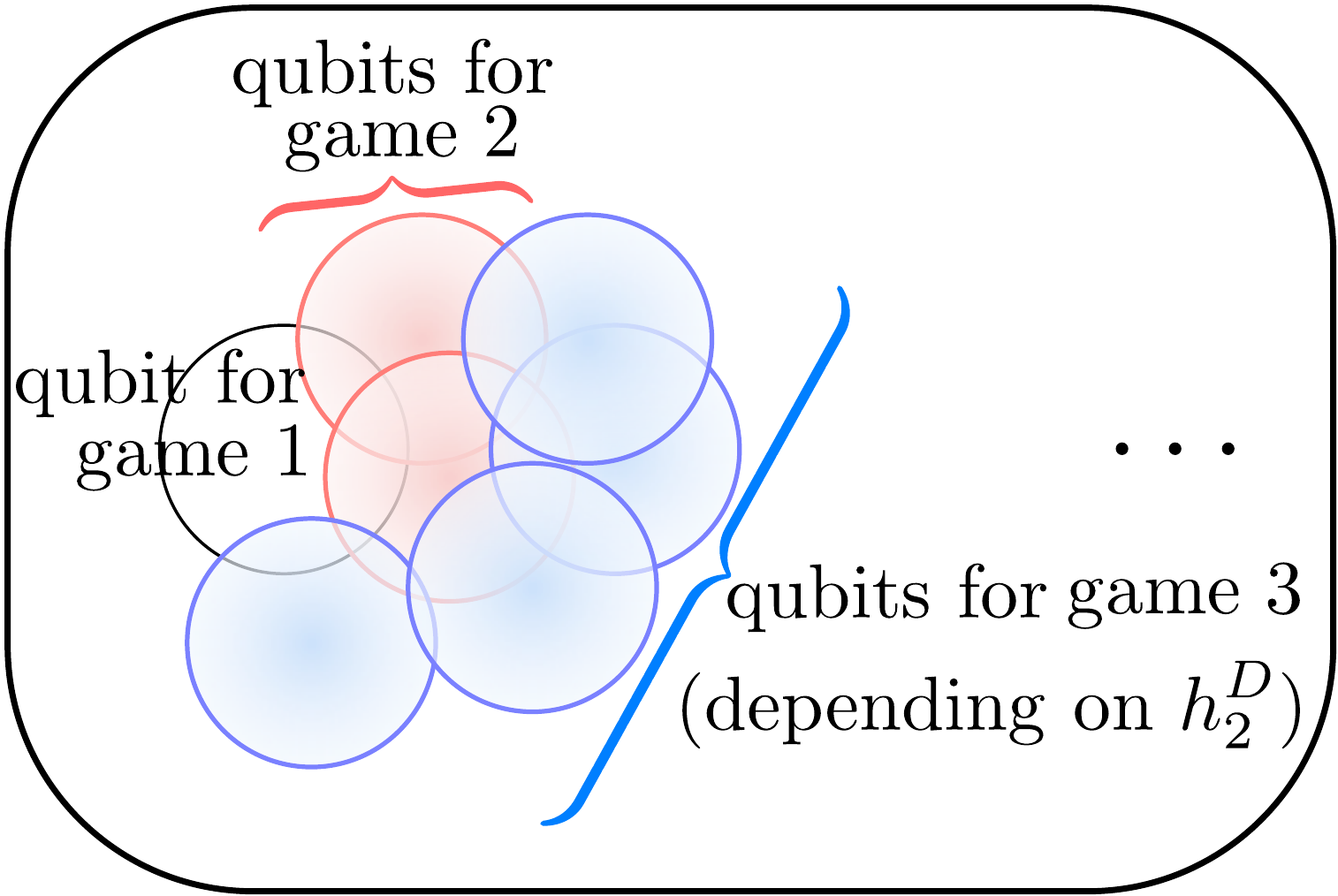}}
\subfigure[Multi-qubit ideal strategy]{\includegraphics[scale=\qubitsballscale]{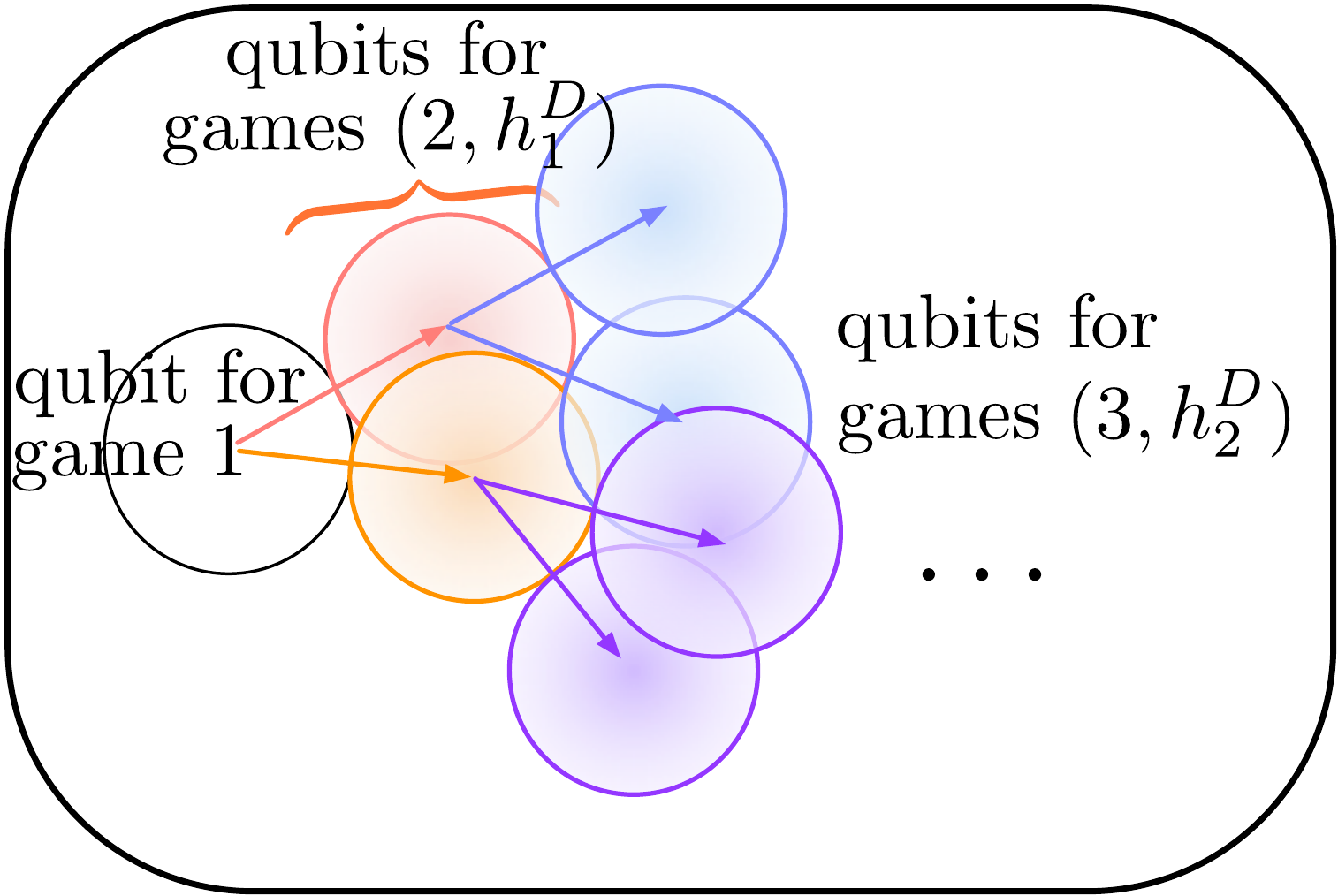}}
\subfigure[Ideal strategy]{\includegraphics[scale=\qubitsballscale]{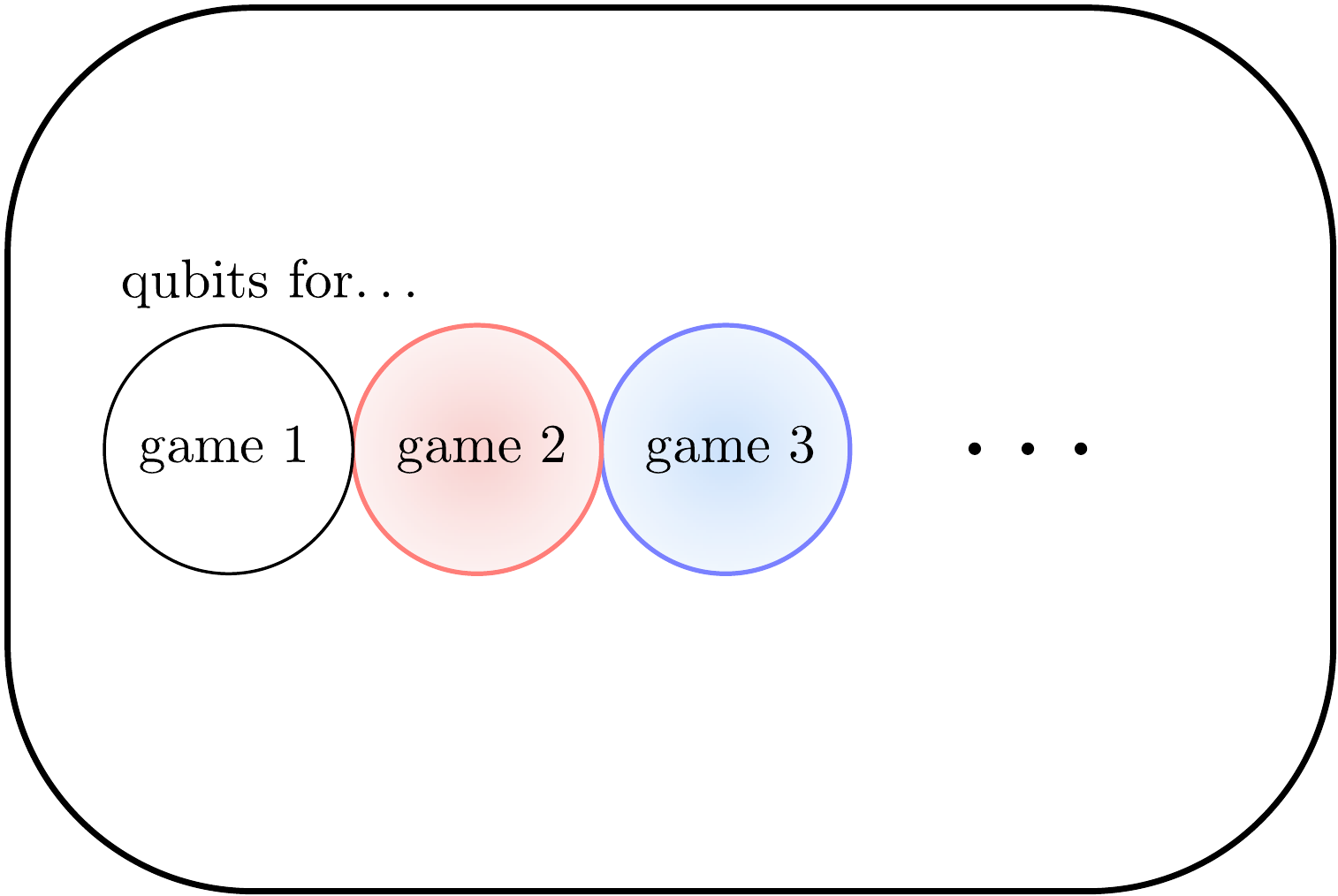}}
\caption{Proof outline for \thmref{t:sequentialCHSHgames}.  
{\bf a}, Initially, each device $D \in \{A, B\}$ can play arbitrarily, measuring in game~$j$ one of two reflections $\RXjah{j}{\alpha}{\hX{j-1}}$, $\alpha \in \{0,1\}$, that depend on the local transcript $\hX{j-1}$ for the previous games.  No structure is given for the Hilbert space $\H_D$.  
{\bf b}, We first show that $D$'s strategy is close to a ``single-qubit ideal strategy," in which for every game it measures some qubit using the ideal CHSH game strategy, but the qubit locations can be arbitrary.  Here, the qubits are illustrated schematically as balls, and the overlaps indicate that they need not be in tensor product.  
{\bf c}, We then construct a nearby ``multi-qubit ideal strategy," in which the qubits used in each game must lie in tensor product with the qubits from previous games, but can overlap qubits used along other transcripts.  
{\bf d}, Finally, we argue that the qubit locations cannot depend significantly on the transcript, and therefore that the original strategy is well-approximated by an ideal strategy that measures a fixed set of $n$ qubits in sequence.  
(Note that these visualizations, representing qubits as balls, are inherently imprecise.  A qubit's location in a Hilbert space is given not by a ball, but by the two anti-commuting reflection operators $\sigma_x$ and $\sigma_z$.)  
} \label{f:proofsketch}
\end{figure*}

The first step of the proof of \thmref{t:sequentialCHSHgames} is to replace the structured strategy $\S$ with one in which the devices play every game using the ideal CHSH game operators on some qubit, up to a local change in basis.  See \figref{f:proofsketch}.  

\begin{definition}[Single-qubit ideal strategy] \label{t:singlequbitidealstrategy}
A strategy~$\S$ is a \emph{single-qubit ideal strategy} if there exist unitaries $\UsingleXjh{j}{\hX{j-1}} : \H_\device \overset{\cong}{\rightarrow} \C^2 \otimes \H_\device'$ such that always 
\begin{equation} \label{e:singlequbitidealstrategy}
\RXjah{j}{\alpha}{\hX{j-1}} = \UsingleXjh{j}{\hX{j-1}}^\dagger (\RXa{\alpha} \otimes \identity) \UsingleXjh{j}{\hX{j-1}} 
 \enspace .
\end{equation}
That is, each device's reflections for game $(j, \hX{j-1})$ are equivalent up to local unitaries to the ideal CHSH game reflections, although the qubits used need not be in tensor product.  
\end{definition}

\begin{theorem} \label{t:singlequbitidealstrategysimulation}
There exists a constant~$\kappa$ such that if 
$\S$ is an $\epsilon$-structured strategy for~$n$ sequential CHSH games, then there is a single-qubit ideal strategy~$\tilde \S$ that $\kappa n^\kappa \epsilon^{1/\kappa}$-simulates an isometric extension of~$\S$.  
\end{theorem}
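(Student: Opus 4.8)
The plan is to apply the single-game rigidity argument of the preceding section locally, to each game $j$ and each partial transcript $\hX{j-1}$, and then to control how the resulting per-game approximations compose across the $n$ sequential games. First I would fix a device $\device \in \{A,B\}$, a game index $j$, and a transcript $\hX{j-1}$, and apply Jordan's Lemma to the pair of reflections $\RXjah{j}{0}{\hX{j-1}}$, $\RXjah{j}{1}{\hX{j-1}}$. This simultaneously block-diagonalizes them into $2 \times 2$ blocks and, after padding with placeholder dimensions (which is precisely where the isometric extension of $\S$ enters), yields an isomorphism $\H_\device \cong \C^2 \otimes \H_\device'$ in which each block supplies the located qubit $\C^2$. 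Within each block the two reflections are genuine qubit reflections meeting at some angle; the ideal CHSH angle is $\tfrac\pi4$, and the single-game computation shows that near-optimal play forces this angle close to $\tfrac\pi4$ on the bulk of the conditional state $\rhojh{j}{\h{j-1}}$.

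Next I would use the structured hypothesis to make this quantitative. For each $(j,\h{j-1})$ that is $\epsilon$-structured, the single-game bound $\bra\psi (M_0^2 + M_1^2)\ket\psi = O(\epsilon)$ forces the Jordan angles to lie within $O(\sqrt\epsilon)$ of $\tfrac\pi4$, weighted by the conditional state's support on each block, and symmetrically for the other device. I would then \emph{define} the single-qubit ideal strategy $\tilde\S$ by taking $\UsingleXjh{j}{\hX{j-1}}$ to be the Jordan change of basis followed by the controlled rotation that aligns each block exactly to the ideal angle, and by replacing $\RXjah{j}{\alpha}{\hX{j-1}}$ with $\UsingleXjh{j}{\hX{j-1}}^\dagger (\RXa{\alpha} \otimes \identity) \UsingleXjh{j}{\hX{j-1}}$. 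By construction $\tilde\S$ satisfies \defref{t:singlequbitidealstrategy}, and for every structured game its reflections differ from those of the isometric extension of $\S$ by an operator that is $O(\sqrt\epsilon)$-small when applied to the relevant conditional state.

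The error-accumulation step is where the factors of $n$ and the fractional power of $\epsilon$ are produced. Each game super-operator $\EXj{k}$ is trace-preserving and completely positive---an average over questions of a projective measurement whose outcome is recorded in a classical register---hence contractive in trace norm. I would therefore bound the simulation distance $\trnorm{\EXj{1,j}(\rhoone) - \EXdecj{\tilde}{1,j}(\rhodecone{\tilde})}$ by a hybrid argument, swapping one game's channel $\EXj{k}$ for its idealized version $\EXdecj{\tilde}{k}$ at a time; contractivity of the remaining channels reduces each hybrid step to the single-game channel distance on the intermediate state. Decomposing that state over transcripts, each structured transcript contributes $O(\sqrt\epsilon)$ while the non-structured transcripts contribute through their total probability, which the $\epsilon$-structured hypothesis caps at $O(\epsilon)$ per game. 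Summing the $j \leq n$ hybrid steps then gives a bound of the form $\kappa n^\kappa \epsilon^{1/\kappa}$.

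The main obstacle is this last propagation. Two difficulties compound. First, single-game rigidity is available only along structured transcripts, so the ``bad'' transcripts must be shown to contribute controllably rather than being amplified by the later games' channels. Second, the structured condition is phrased in terms of the \emph{joint} winning probability of game $(j,\h{j-1})$, whereas the simulation distance involves the \emph{single-party} super-operators $\EXj{1,j}$, conditioned only on $\hX{j-1}$. Reconciling these---passing from a joint, post-selected, state-dependent angle bound to a single-party channel bound on an averaged intermediate state, uniformly across all transcripts---is the delicate part, and it is the source of both the polynomial overhead in $n$ and the loss from $\epsilon$ to $\epsilon^{1/\kappa}$.
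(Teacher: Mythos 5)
Your proposal follows essentially the same route as the paper's proof: the paper likewise defines $\tilde{\S}$ by replacing each game's measurement operators with the ideal ones supplied by single-game CHSH rigidity (which is itself proved via the Jordan's-Lemma argument you unpack inline), bounds each game's substitution error by $2\delta + O(\sqrt{\epsilon})$ using the structured hypothesis, and telescopes via the same hybrid/contractivity expansion $\EABj{1,n}(\rhoone) - \EABdecj{\tilde}{1,n}(\rhoone) = \sum_j \EABdecj{\tilde}{j+1,n}\big(\EABj{j}(\rhoj{j}) - \EABdecj{\tilde}{j}(\rhoj{j})\big)$. The one caveat is that this argument, run on the joint super-operators, directly yields only \emph{weak} simulation; the upgrade to the single-party simulation bound of \defref{t:simulationdef}---precisely the joint-versus-single-party reconciliation you flag as the delicate step---is also deferred by the paper's own sketch as ``more technical.''
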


\begin{proof}[Proof sketch]
As explained in the main text, let $\EXdecj{\tilde}{j}$ be the super-operator that replaces the device's measurement operators with the ideal operators promised by the CHSH rigidity theorem.  Then $\tilde \S = (\rhoone, \{ \EAdecj{\tilde}{j} \}, \{ \EBdecj{\tilde}{j} \})$.  If $\Pr[\text{game~$j$ is $\epsilon$-structured}] \geq 1-\delta$, then $\trnorm{\EABj{j}(\rhoj{j}) - \EABdecj{\tilde}{j}(\rhoj{j})} \leq 2 \delta + O(\sqrt \epsilon)$.  (This expression combines bounds on the probability of the bad event and the $O(\sqrt \epsilon)$ error from the good event.)  

To show our goal, that $\EABj{1,n}(\rhoone) \approx \EABdecj{\tilde}{1,n}(\rhoone)$ in trace distance, use a hybrid argument that works backwards from game~$n$ to game~$1$ fixing each game's measurement operators one at a time.  The error introduced from fixing a game~$j$, by moving from $\EABj{j}(\rhoj{j})$ to $\EABdecj{\tilde}{j}(\rhoj{j})$, does not increase in later games because applying a super-operator cannot increase the trace distance.  Mathematically, this hybrid argument is simply a triangle inequality using the expansion 
\begin{equation*}
\EABj{1,n}(\rhoone) - \EABdecj{\tilde}{1,n}(\rhoone) = 
\sum_{j \in [n]} \EABdecj{\tilde}{j+1,n} \big(\EABj{j}(\rhoj{j}) - \EABdecj{\tilde}{j}(\rhoj{j})\big)
 \enspace .  
\end{equation*}

Therefore $\tilde \S$ weakly simulates $\S$; the extension to simulation is more technical, as described in the main text.  
\end{proof}

Next, we find a nearby strategy in which the qubits for successive games are in tensor product.  

\begin{definition}[Multi-qubit ideal strategy] \label{t:multiqubitidealstrategy}
A strategy~$\S$ is a \emph{multi-qubit ideal strategy} if there is a unitary isomorphism $\XmultiX : \H_\device \overset{\cong}{\rightarrow} (\C^2)^{\otimes n} \otimes \H_\device'$ under which for unitaries $\UmultiXjh{j}{\hX{j-1}} \in \L((\C^2)^{\otimes (n-j+1)} \otimes \H_\device')$ such that 
\begin{equation}\begin{split} \label{e:multiqubitidealstrategy}
\RXjah{j}{\alpha}{\hX{j-1}} = \qquad\qquad\qquad\qquad\qquad\qquad\quad
\XmultiX{}^\dagger
\UmultiXj{1}{}^\dagger \ldots \\ \quad \big(\identity_{\C^2}^{\otimes (j-1)} 
\otimes \UmultiXjh{j}{\hX{j-1}}^\dagger\big) (\RXa{\alpha})_j \big(\identity_{\C^2}^{\otimes (j-1)} \otimes \UmultiXjh{j}{\hX{j-1}}\big) \\ \ldots 
\UmultiXj{1}
\XmultiX
 \enspace .
\end{split}\end{equation}
That is, $\S$ is a single-qubit ideal strategy in which the qubits used in each game must lie in tensor product with the qubits from previous games.  
\end{definition}

\begin{theorem} \label{t:multiqubitidealstrategysimulation}
There exists a constant~$\kappa$ such that if $\tilde \S$ is an $\epsilon$-structured single-qubit ideal strategy for~$n$ sequential CHSH games, then there is a multi-qubit ideal strategy~$\bar \S$ that $\kappa n^\kappa \epsilon^{1/\kappa}$-simulates an isometric extension of~$\tilde \S$.  
\end{theorem}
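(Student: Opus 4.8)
The plan is to build the single global isomorphism $\XmultiX$ and the per-game unitaries $\UmultiXjh{j}{\hX{j-1}}$ by induction on the game index~$j$, committing one qubit per game. I would maintain the invariant that after processing games~$1$ through~$j-1$ the strategy has been rewritten in multi-qubit form on qubits~$1,\ldots,j-1$, which now lie in a fixed tensor decomposition $\H_\device \cong (\C^2)^{\otimes(j-1)} \otimes \H_\device'$, and that this rewriting $\epsilon_{j-1}$-simulates $\tilde\S$ restricted to those games for some accumulating error $\epsilon_{j-1}$. The first qubit is free: the game-$1$ unitary $\UsingleXjh{1}{\hX{0}}$ already locates a qubit in $\H_\device$, which I take as the first tensor factor. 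At the inductive step I take the single-qubit unitary $\UsingleXjh{j}{\hX{j-1}}$, which locates \emph{some} qubit inside $\H_\device$, and show it can be replaced --- up to small trace-norm error in the induced super-operator --- by a unitary $\UmultiXjh{j}{\hX{j-1}}$ supported on the uncommitted factor $(\C^2)^{\otimes(n-j+1)} \otimes \H_\device'$, i.e.\ acting trivially on qubits~$1,\ldots,j-1$.

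The crux is a decoupling/monogamy argument. Conditioned on a fixed transcript $\h{j-1}$, which records the joint outcomes through game~$j-1$, each already-committed qubit pair has collapsed to a definite pure product state across the $A{:}B$ cut, and hence each committed qubit is \emph{unentangled} with the opposite device. By contrast, the $\epsilon$-structured hypothesis guarantees that for a $1-\epsilon$ fraction of transcripts the game $(j,\h{j-1})$ is played near-optimally, so by the single-game CHSH rigidity theorem the qubit Alice measures in game~$j$ is near-maximally entangled with the qubit Bob measures. Monogamy then forces this freshly measured qubit to be nearly disjoint from the unentangled committed qubits: a qubit whose reduced state is near-maximally mixed and near-perfectly correlated with the other device cannot share support with qubits whose reduced states are pure and uncorrelated with it. Quantitatively, this says the game-$j$ reflections $\RXjah{j}{\alpha}{\hX{j-1}}$ approximately commute with the computational-basis operators on qubits~$1,\ldots,j-1$, with commutator bounded by the entanglement deficit, i.e.\ by $O(\sqrt\epsilon)$ on the good transcripts.

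Given this approximate commutation, a perturbation of the relevant algebras yields a unitary $\UmultiXjh{j}{\hX{j-1}}$ on the uncommitted factor that rotates the game-$j$ qubit into the $j$th tensor slot and reproduces the ideal reflection $(\RXa{\alpha})_j$, at trace-norm cost $O(\sqrt\epsilon)$ per good transcript; the $\epsilon$ fraction of bad transcripts contributes only $O(\epsilon)$ when bounded by total probability. Summing over $\device\in\{A,B\}$ and over transcripts, fixing game~$j$ costs $\poly(\epsilon)$ in the super-operator trace norm, and because applying later super-operators cannot increase trace distance --- exactly the hybrid step used to prove \thmref{t:singlequbitidealstrategysimulation} --- these per-game errors simply add, giving total simulation error $\poly(n)\,\poly(\epsilon)$, which fits inside $\kappa n^\kappa \epsilon^{1/\kappa}$ for a suitable constant~$\kappa$. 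Finally I would collect the committed-qubit rotations into the single fixed isomorphism $\XmultiX$ and the residual per-game rotations into the $\UmultiXjh{j}{\hX{j-1}}$, and absorb any added placeholder dimensions into the isometries, producing the multi-qubit ideal strategy $\bar\S$ simulating an isometric extension of~$\tilde\S$.

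The main obstacle is the quantitative monogamy step: converting ``qubit~$j$ is near-maximally entangled while the committed qubits are unentangled'' into a robust operator-norm bound on the commutator between the game-$j$ reflections and the committed-qubit operators, uniformly enough that errors accumulate polynomially rather than exponentially in~$n$. This needs an approximate, stable version of the fact that maximal entanglement across a cut is monogamous, together with care that the tensor decomposition fixed for games~$1,\ldots,j-1$ does not drift as more qubits are peeled off and as one conditions on each new outcome. Controlling this transcript-dependence of the committed structure is the most delicate bookkeeping, and it is precisely what forces the polynomial (rather than constant) loss in the error bound.
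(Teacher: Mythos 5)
Your high-level skeleton---commit one qubit per game, then let per-game errors add via the same hybrid/triangle-inequality step as in \thmref{t:singlequbitidealstrategysimulation}---matches the paper's, but the mechanism you propose for the per-game step has a genuine gap. You want monogamy to yield a \emph{robust operator-norm bound} on the commutator between the game-$j$ reflections and operators on the committed qubits. No such bound exists, even at $\epsilon=0$. Counterexample: let Alice's space be $\C^2\otimes\C^2$ with the committed qubit in slot~1, let the shared state be $\ket{0}_1\otimes\ket{\varphi}$ with $\ket{\varphi}$ an EPR pair between slot~2 and Bob, and take $\RAa{a} = \ketbra{0}{0}\otimes\sigma_a + \ketbra{1}{1}\otimes\tau_a$, where $(\sigma_0,\sigma_1)$ are the ideal CHSH reflections and $(\tau_0,\tau_1)$ are ideal reflections in a rotated basis (still anticommuting). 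Since any two anticommuting reflections on $\C^4$ are jointly unitarily equivalent to $(\sigma_z\otimes\identity, \sigma_x\otimes\identity)$, this strategy is single-qubit ideal; and it is \emph{exactly} optimal, because the state never sees the $\ket{1}_1$ block. Yet $[\RAa{a}, \sigma_x\otimes\identity]$ has operator norm $\norm{\sigma_a-\tau_a} = \Theta(1)$. The point is that structuredness and rigidity constrain the reflections only on the support of the state; on the orthogonal complement they are arbitrary, so "entanglement deficit $O(\sqrt\epsilon)$" can never control an operator norm. (Note also that commutation with just the computational-basis projections on committed qubits---which my example does satisfy---only gives block-diagonality, not action on the uncommitted tensor factor, so it cannot feed your next step either.) Any correct bound must be state-dependent, and then your "perturbation of the relevant algebras" step loses its footing: rounding reflections that almost-commute \emph{on a state} into reflections exactly supported on the uncommitted factor is itself a nontrivial construction you do not supply.

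The paper sidesteps both difficulties with two ideas absent from your proposal. First, it never perturbs operators: it defines $\bar\S$ by an exact, transcript-controlled unitary conjugation---after each game, swap the just-measured qubit into a fresh ancilla slot and rotate a replacement into its place---so the multi-qubit tensor structure holds by construction, and the entire approximation burden is moved onto a statement about states, namely that undoing the conjugation returns the ancillas to $\approx\ket{0^n}$. Second, to prove that later games cannot disturb the committed qubits (the dynamical statement your static monogamy appeal cannot deliver, precisely because of the off-support freedom above), it uses the EPR identity $(M\otimes I)(\ket{00}+\ket{11}) = (I\otimes M^T)(\ket{00}+\ket{11})$: because the games are structured, Alice's measurements in games $j+1,\ldots,k$ can be replaced, up to small trace-norm error, by super-operators acting on Bob's space, which manifestly cannot change the expectation of the projector $X_j$ onto Alice's committed qubit; the Gentle Measurement Lemma then converts $\Tr(X_j\,\tilde\rho_{k+1})\approx 1$ into closeness of states. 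You would need to import this mirroring argument (or an equivalent state-dependent tool) to fill the hole in your inductive step; as written, the monogamy-to-commutator-to-perturbation chain breaks at its first link.
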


\begin{proof}[Proof sketch]
The tensor-product structure is constructed beginning with a trivial transformation on~$\tilde \S$: to each device, add~$n$ ancilla qubits each in state~$\ket 0$.  Next, after a qubit has been measured, say as~$\ket{\alpha_j}$ in game~$j$, swap it with the $j$th ancilla qubit, then rotate this fresh qubit from $\ket 0$ to $\ket{\alpha_j}$ and continue playing games $j+1, \ldots, n$.  This defines a unitary change of basis that places the outcomes for games $1$ to~$j$ in the first $j$ ancilla qubits, and leaves the state in the original Hilbert space unchanged.  Since qubits are set aside after being measured, the qubits for later games are automatically in tensor product with those for earlier games; the resulting strategy $\bar \S$ is multi-qubit ideal.  At the end of the $n$ games, swap back the ancilla qubits and undo their rotations, using the transcript.  

The key to showing that $\bar \S$ is close to $\tilde \S$ is the fact that operations on one half of an EPR state can equivalently be performed on the other half, since for any $2 \times 2$ matrix~$M$, $(M \otimes I)(\ket{00} + \ket{11}) = (I \otimes M^T)(\ket{00} + \ket{11})$.  This means that the outcome of an $\epsilon$-structured CHSH game would be nearly unchanged if Bob were hypothetically to perform Alice's measurement before his own.  By moving Alice's measurement operators for games $j+1$ to $n$ over to Bob's side, we see that they cannot significantly affect the qubit $\ket{\alpha_j}$ from game~$j$ on her side.  Therefore, undoing the original change of basis restores the ancilla qubits nearly to their initial state $\ket{0^n}$, and $\tilde \S \approx \bar \S$.    

\def\AmeasBBdecj #1#2{#1{\F}^{AB}_{#2}}

Formally, define a unitary super-operator ${\cal V}_j$ that rotates the $j$th ancilla qubit to $\ket{\alpha_j}$, depending on Alice's local transcript~$\hA{j}$.  Define a unitary super-operator ${\cal T}_j$ to apply ${\cal V}_j$ and swap the $j$th ancilla qubit with the qubit Alice uses in game~$j$ (depending on~$\hA{j-1}$).  Alice's multi-qubit ideal strategy is given by 
\begin{equation}
\EAdecj{\bar}{j} = {\cal T}_{1,j-1}^{-1} (\identity_{\C^{2^n}} \otimes \EAdecj{\tilde}{j}) {\cal T}_{1,j-1}
 \enspace .
\end{equation}
We aim to show that the strategy given by $\rhoone$, $\{ \EAdecj{\bar}{j} \}$ and $\{ \EBdecj{\tilde}{j} \}$ is close to~$\tilde \S$ up to the fixed isometry that prepends $\ketbra{0^n}{0^n}$ to the state, i.e., that $\ketbra{0^n}{0^n} \otimes \EABdecj{\tilde}{1,n}(\rhoone) \approx \EAdecj{\bar}{1,n}\big( \ketbra{0^n}{0^n} \otimes \EBdecj{\tilde}{1,n} (\rhoone) \big)$.  Define a super-operator $\AmeasBBdecj{\tilde}{j}$, in which Alice's measurements are made on \emph{Bob's} Hilbert space~$\H_B$, on the qubit determined by Bob's local transcript $\hB{j-1}$.  Since most games are $\epsilon$-structured, by the CHSH rigidity theorem, $\AmeasBBdecj{\tilde}{j+1,k}(\rhodecj{\tilde}{j+1}) \approx \EABdecj{\tilde}{j+1,k}(\rhodecj{\tilde}{j+1}) = \rhodecj{\tilde}{k+1}$ for any $j \leq k$.  Since $\AmeasBBdecj{\tilde}{j+1,k}$ acts on $\H_B$, it does not affect Alice's qubit $\ket{\alpha_j}$ from game~$j$ at all, and so this qubit must stay near $\ket{\alpha_j}$ in $\rhodecj{\tilde}{k+1}$ as well, i.e., the trace of the reduced density matrix against the projection $\ketbra{\alpha_j}{\alpha_j}$ stays close to one.  As this holds for every~$j$, ${\cal T}_{1,n}^{-1}$ indeed returns the ancillas almost to their initial state~$\ket{0^n}$.  

In more detail, let~$X_j$ be the operator that projects onto Alice's $j$th ancilla qubit and the qubit she uses in the $j$th game being $\ket{0} \otimes \ket{\alpha_j}$.  By definition, $\Tr (X_j \, \rhodecj{\tilde}{j+1}) = 1$.  By the Gentle Measurement Lemma~\cite{Winter99coding, OgawaNagaoka07channelcoding}, it suffices to show that $\Tr (X_j \, \rhodecj{\tilde}{k+1}) = \Tr X_j \EABdecj{\tilde}{j+1,k}(\rhodecj{\tilde}{j+1}) \approx 1$.  This is not obvious; since the operators for games~$j+1$ to $k$ do not act in tensor product, they can disturb the qubit measured in game~$j$.  However, since a super-operator on~$\H_B$ cannot affect the expectation of an operator supported on~$\H_A$, we find 
\begin{equation*}\begin{split}
\Tr (X_j \, \rhodecj{\tilde}{k+1}) &= \Tr X_j \EABdecj{\tilde}{j+1,k}(\rhodecj{\tilde}{j+1}) \\ &\approx \Tr X_j \AmeasBBdecj{\tilde}{j+1,k}(\rhodecj{\tilde}{j+1}) \\ &= \Tr(X_j  \, \rhodecj{\tilde}{j+1}) \\ &= 1
 \enspace .
\end{split}\end{equation*}

A symmetrical argument adjusts Bob's super-operators $\{ \EBdecj{\tilde}{j} \}$ to $\{ \EBdecj{\bar}{j} \}$, implying that $\bar \S$ weakly simulates $\tilde \S$.  
\end{proof}

The last major step in the proof of \thmref{t:sequentialCHSHgames} is to argue that the qubit locations cannot depend significantly on the local transcripts, and therefore simulate a multi-qubit ideal strategy with an ideal strategy.  

\begin{theorem} \label{t:globalgluing}
There exists a constant~$\kappa$ such that if~$\bar \S$~is an $\epsilon$-structured multi-qubit ideal strategy for~$n$ sequential~CHSH games, then there is an ideal strategy that $\kappa n^\kappa \epsilon^{1/\kappa}$-simulates~$\bar \S$.  
\end{theorem}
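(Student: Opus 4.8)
The only feature distinguishing a multi-qubit ideal strategy from an ideal strategy is that the alignment unitaries $\UmultiXjh{j}{\hX{j-1}}$ of \defref{t:multiqubitidealstrategy} are allowed to depend on the local transcript $\hX{j-1}$, whereas \defref{t:idealstrategy} demands a single fixed isometry $\UidealX$ used along every branch of the game tree. The plan is to show that for each game~$j$ the unitary $\UmultiXjh{j}{\hX{j-1}}$ is, up to small error, independent of $\hX{j-1}$, and then to collapse the transcript-indexed family of alignments into one fixed isometry by a hybrid argument over the~$n$ games.

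The engine of the argument is the EPR correlation between the two devices together with the transfer identity $(M \otimes I)\ket\varphi = (I \otimes M^T)\ket\varphi$. Applying the CHSH rigidity theorem to game~$j$, which is $\epsilon$-structured on all but a $\delta$-fraction of transcripts, the qubit Alice uses in game~$j$ --- located in the global coordinates $\XmultiA$ by the product $\UmultiAj{1}, \ldots, \UmultiAjh{j}{\hA{j-1}}$ --- is, up to $O(\sqrt\epsilon)$, maximally entangled with the qubit Bob uses in game~$j$, located by $M^B_1, \ldots, M^B_j(\hB{j-1})$. By the transfer identity, the frame of Alice's game-$j$ qubit is then determined, up to transpose, by the frame of Bob's game-$j$ qubit, and symmetrically.

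The crux is an asymmetry in the transcript dependence. By \defref{t:multiqubitidealstrategy}, Bob's alignment $M^B_j(\hB{j-1})$ depends only on Bob's transcript $\hB{j-1}$ and hence not at all on Alice's transcript $\hA{j-1}$; symmetrically for Alice. Fix $\hA{j-1}$ and let $\hB{j-1}$ range over its conditional support. Because the CHSH inputs are independent and uniform and the correlations are imperfect, this support is large; yet the EPR pairing forces Bob's game-$j$ frame to be the transpose-partner of Alice's \emph{fixed} frame for every compatible $\hB{j-1}$, so $M^B_j(\hB{j-1})$ must be constant, up to $O(\sqrt\epsilon)$, over that support. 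Alternating this deduction between the two sides chains the constraint along the compatibility graph on pairs $(\hA{j-1}, \hB{j-1})$, which is connected with diameter $O(j)$; chaining then shows that $\UmultiAjh{j}{\hA{j-1}}$ and $M^B_j(\hB{j-1})$ each agree, up to $O(j \sqrt\epsilon)$, with their values along a fixed reference transcript, yielding fixed unitaries $\UmultiAj{j}$ and $M^B_j$. This gluing step is the main obstacle: one must verify that the compatibility graph is genuinely connected, bound its diameter, and control how the $O(\sqrt\epsilon)$ per-edge error from CHSH rigidity accumulates along chains so that it grows only polynomially in~$n$ rather than compounding.

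Having removed the transcript dependence, assemble the isometry $\UidealX$ from $\XmultiX$ followed by the fixed alignments $\UmultiXj{1}, \ldots, \UmultiXj{n}$; the resulting strategy $\hat\S$ is ideal in the sense of \defref{t:idealstrategy}, and since $\XmultiX$ is an isomorphism no isometric extension is needed. To bound the simulation error, expand
\begin{equation*}
\EABdecj{\bar}{1,n}(\rhoone) - \EABdecj{\hat}{1,n}(\rhoone) = \sum_{j \in [n]} \EABdecj{\hat}{j+1,n}\big(\EABdecj{\bar}{j}(\rhodecj{\bar}{j}) - \EABdecj{\hat}{j}(\rhodecj{\bar}{j})\big)
\end{equation*}
and use that a super-operator cannot increase trace distance, so the total error is at most the sum over~$j$ of the per-game replacement errors, each $O(\delta + j\sqrt\epsilon)$. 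Summing over $j \in [n]$ with $\delta \leq \epsilon$ and absorbing the polynomial factors in~$n$ into the constant gives a bound of the form $\kappa n^\kappa \epsilon^{1/\kappa}$.
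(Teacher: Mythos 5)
Your proposal takes a genuinely different route from the paper --- you try to prove directly that the alignment unitaries $\UmultiXjh{j}{\hX{j-1}}$ are approximately transcript-independent --- and the central step of that route has a gap. You fix $\hA{j-1}$, vary $\hB{j-1}$, and claim "the EPR pairing forces" $M^B_j(\hB{j-1})$ to be constant over the conditional support. This does not follow as stated: when $\hB{j-1}$ changes, the conditioned state $\rhoj{j}(\hA{j-1}, \hB{j-1})$ changes too, since it is conditioned on a different event. Rigidity only requires that \emph{within each branch} Alice's fixed qubit is EPR-paired with wherever Bob's qubit then sits, so there is no contradiction in Bob's frame moving with his transcript; Bob's alignment unitaries can legitimately shuffle entanglement around his own space. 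To extract a contradiction from monogamy you would first need to know that the residual states left after games $1, \ldots, j-1$, conditioned on different transcripts, are approximately the \emph{same} state (or related by local unitaries on already-collapsed qubits), with errors controlled despite the inverse-probability blow-ups that conditioning causes --- and establishing exactly that, across $n$ games, is essentially the content of the theorem. You flag the connectivity, diameter, and error-accumulation of your compatibility graph as "the main obstacle" to be verified; that obstacle \emph{is} the theorem, so the proposal is circular where it matters. There is also a gauge issue you pass over: an EPR pair is invariant under $M \otimes \bar M$, so the pairing alone never determines one frame from the other; only the state together with the reflections pins frames down, up to a residual symmetry any chaining argument must track.

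The paper avoids this circularity by never comparing unitaries at all. It draws a single random reference transcript $\hdec{\hat}{n}$, which by Markov inequalities can be taken to satisfy (a) each game $j$ is $\epsilon$-structured conditioned on $\hdec{\hat}{j-1}$, and (b) conditioned on $\hdec{\hat}{j-1}$, subsequent games are likely structured; the candidate ideal strategy is \emph{defined} to use the frames along $\hdec{\hat}{n}$. It then compares conditioned \emph{states}, interpolating one transcript coordinate at a time between the actual transcript and the reference one: for transcripts differing only in game $j$ on Alice's side, rigidity plus (a) give $\rhodecj{\bar}{j+1}(\h{j}) \approx V^A_j \, \rhodecj{\bar}{j+1}(\hdec{\hat}{j}) \, V^A_j{}^\dagger$ for a local unitary on the collapsed qubit, and the key trick --- replaying Alice's remaining games on Bob's qubits via super-operators $\AmeasBBj{j+1,n}$, which commute with $V^A_j$, then pulling them back to Alice's side using (b) --- propagates this relation to the end of the protocol even though Alice's subsequent super-operators conditioned on $\h{j}$ versus $\hdec{\hat}{j}$ can be completely different. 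Your final hybrid bound silently assumes this last subtlety away: monotonicity of trace distance under super-operators does not help when the two branches apply \emph{different} super-operators after game $j$. Finally, note that your constructed strategy is not actually ideal in the sense of \defref{t:idealstrategy}: that definition also requires the initial state to be $n$ EPR pairs up to isometry, and you never modify or control $\rhoone$. The paper needs one further application of CHSH rigidity to replace $\rhoone$ by a state with EPR pairs in the reference qubit positions; without that step the theorem's conclusion does not follow.
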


\begin{proof}[Proof sketch]
\def\UmultiXdecj #1#2{#1{M}^{\device}_{#2}}
\def\UmultiXdecjh #1#2#3{#1{M}^{\device}_{#2}({#3})}
\def\XmultiXdec #1{#1{\mathcal{Y}}^{\device}}
Fix a transcript $\hdec{\hat}{n}$, chosen at random from the distribution of transcripts for $\bar \S$.  Define a strategy $\hat \S$ to use the same initial state $\rhodecone{\bar}$ as $\bar \S$ and the qubits specified by $\hdec{\hat}{n}$ in~$\bar \S$, i.e., $\RXdecjah{\hat}{j}{\alpha}{\hX{j-1}} = \RXdecjah{\hat}{j}{\alpha}{\hXdec{\hat}{j-1}}$ independent of~$\hX{j-1}$.  Thus, as required in \defref{t:idealstrategy}, $\RXdecjah{\hat}{j}{\alpha}{\hX{j-1}} = \UidealXdec{\hat}{}^\dagger (\RXa{\alpha})_j \UidealXdec{\hat}$ for 
\begin{equation*}
\UidealXdec{\hat} = \big(\identity_{(\C^2)^{\otimes (n-1)}} \otimes \UmultiXdecjh{\bar}{n}{\hX{n-1}}\big) \ldots \UmultiXdecj{\bar}{1} \XmultiXdec{\bar}
 \enspace .
\end{equation*}

We argue that $\hat \S$ closely approximates $\bar \S$, provided that~$\hdec{\hat}{n}$ satisfies: for every~$j$, conditioned on the partial transcript~$\hdec{\hat}{j-1}$, 
(a) game~$j$ is $\epsilon$-structured, and 
(b)~there is a high probability that every subsequent game is $\epsilon$-structured.  By Markov inequalities, most transcripts satisfy these conditions.  

\def\rhodecjh #1#2#3{#1{\rho}_{#2}({#3})}	% cannot hide the j argument
\def\AmeasBBdecjh #1#2#3{#1{\F}^{AB \vert \smash{#3}}_{#2}}	% Alice's measurement on Bob's qubit followed by Bob's measurement
\def\EABdecjh #1#2#3{#1{\E}^{AB \vert \smash{#3}}_{#2}}

We connect $\bar \S$ to $\hat \S$ by an argument that one game at a time switches play to locate qubits according to $\hdec{\hat}{n}$.  The intermediate steps relate strategies in which the devices locate their qubits using a hybrid $(\hdec{\hat}{j}, \h{j+1,n})$ of $\hdec{\hat}{n}$ and the actual transcript $\h{n}$.  

Consider a partial transcript $\h{j}$ that differs from $\hdec{\hat}{j}$ only in the $j$th game, say on Alice's side.  By~(a) and the CHSH rigidity theorem, Alice's $j$th qubit is collapsed and nearly in tensor product with the rest of the state.  Therefore, there exists a unitary~$\AAunitaryAj{j}$ acting on this qubit such that 
\begin{equation} \label{e:proofidealocalgluingassumption}
\rhodecjh{\bar}{j+1}{\h{j}} \approx \AAunitaryAj{j} \rhodecjh{\bar}{j+1}{\hdec{\hat}{j}} \AAunitaryAj{j}{}^\dagger
 \enspace ,
\end{equation}
up to error $O(\sqrt \epsilon)$.  Since applying a super-operator cannot increase trace distance and on Bob's side $\hB{j} = \hBdec{\hat}{j}$, therefore 
\begin{equation*}
\AmeasBBdecjh{\bar}{j+1,n}{\hB{j}}\big(\rhodecjh{\bar}{j+1}{\h{j}}\big) \approx \AAunitaryAj{j} \AmeasBBdecjh{\bar}{j+1,n}{\hBdec{\hat}{j}}\big(\rhodecjh{\bar}{j+1}{\hdec{\hat}{j}}\big) \AAunitaryAj{j}{}^\dagger
 \enspace .
\end{equation*}
Here, $\AmeasBBdecjh{\bar}{j+1,n}{\hB{j}}$ is the same super-operator used in the multi-qubit ideal strategy simulation step---that plays Alice's games on Bob's qubits---except conditioned on the local transcript~$\hB{j}$.  By condition (b), these super-operators can be pulled back to Alice's side, to give 
\begin{equation*}
\EABdecjh{\bar}{j+1,n}{\h{j}}\big(\rhodecjh{\bar}{j+1}{\h{j}}\big) \approx \AAunitaryAj{j} \EABdecjh{\bar}{j+1,n}{\hdec{\hat}{j}}\big(\rhodecjh{\bar}{j+1}{\hdec{\hat}{j}}\big) \AAunitaryAj{j}{}^\dagger
 \enspace .
\end{equation*}
Note that this approximation does not follow immediately from Eq.~\eqnref{e:proofidealocalgluingassumption}, because Alice's super-operators conditioned on $\hA{j}$ can be very different from her super-operators conditioned on~$\hAdec{\hat}{j}$.  

By fixing the coordinates one at a time in this way, we find that for a typical transcript~$\h{n}$, $\rhodecjh{\bar}{n+1}{\h{n}} \approx V^{AB}_{1,n} \rhodecjh{\bar}{n+1}{\hdec{\hat}{n}} V^{AB}_{1,n}{}^\dagger$, and we conclude that $\EABdecj{\bar}{1,n}(\rhoone) \approx \EABdecj{\hat}{1,n}(\rhoone)$.  

Since $\EABdecj{\hat}{1,n}$ measures qubits in tensor product with each other, by using the CHSH rigidity theorem one last time, it is not difficult to show that $\EABdecj{\hat}{1,n}(\rhoone) \approx \EABdecj{\hat}{1,n}(\rhodecone{\hat})$, where $\rhodecone{\hat}$ has~$n$ EPR pairs in the qubit positions determined by~$\hdec{\hat}{n}$.  Thus $\bar \S$ is weakly simulated by the ideal strategy given by $(\rhodecone{\hat}, \{ \EAdecj{\hat}{j} \}, \{ \EAdecj{\hat}{j} \})$.  
\end{proof}

These three steps chain together to yield \thmref{t:sequentialCHSHgames}, via: 

\begin{lemma} \label{t:simulationpreservesstructure}
Let $\S$ be a $(\delta, \epsilon)$-structured strategy for $n$ sequential CHSH games.  If~$\tilde \S$ is a strategy that weakly $\eta$-simulates~$\S$, then $\tilde \S$ is $(\delta + 2 \sqrt{\eta}, \epsilon + 16 \sqrt{\eta})$-structured.  
\end{lemma}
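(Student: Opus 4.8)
The plan is to reduce the statement to a purely classical claim about transcript distributions, together with one clean thresholding argument. The first observation is that whether a game $(j,\h{j-1})$ is $\epsilon$-structured depends only on the classical joint distribution of the transcript $\h{j}$: writing $w_j(\h{j-1})$ for the probability of winning game~$j$ conditioned on $\h{j-1}$ (with fresh uniform questions), the game is $\epsilon$-structured exactly when the \emph{deficit} $q_j(\h{j-1}) = \omega^* - w_j(\h{j-1})$ is at most $\epsilon/8$, and $q_j(\h{j-1})$ is a function of the distribution of $\h{j}$ alone. That distribution is precisely the diagonal of the classical register of $\EABj{1,j}(\rhoone)$. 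Since dephasing this register is a channel and cannot increase trace norm, the weak $\eta$-simulation hypothesis $\trnorm{\EABj{1,j}(\rhoone) - \EABdecj{\tilde}{1,j}(\rhodecone{\tilde})} \le 2\eta$ forces the transcript distributions of $\S$ and $\tilde\S$ through game~$j$ to have $\ell_1$-distance at most $2\eta$; the same bound holds for the marginal of any transcript sub-event, in particular for the event ``prefix $\h{j-1}$ and game~$j$ lost.''

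Fix $j$ and write, under either strategy, $P(\h{j-1})$ for the probability of the prefix and $L(\h{j-1})$ for the probability of the prefix together with losing game~$j$, so that with $\ell^* = 1-\omega^*$ the condition $q_j(\h{j-1}) > \epsilon/8$ is equivalent to $L(\h{j-1}) > (\ell^* + \epsilon/8)\,P(\h{j-1})$. Both $P$ and $L$ are marginals of the transcript distribution, so by the previous paragraph $\sum_{\h{j-1}}|P^{\S} - P^{\tilde\S}| \le 2\eta$ and $\sum_{\h{j-1}}|L^{\S} - L^{\tilde\S}| \le 2\eta$. Let $B$ be the set of prefixes that are $\epsilon$-bad under $\S$, i.e.\ $q_j^{\S} > \epsilon/8$; structure of $\S$ gives $\sum_{\h{j-1}\in B}P^{\S}(\h{j-1}) \le \delta$. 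The goal is to bound the $\tilde\S$-probability of the set $\tilde B$ of prefixes with $q_j^{\tilde\S} > \epsilon/8 + 2\sqrt\eta$.

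I split $\tilde B$ into the disjoint pieces $\tilde B\cap B$ and $\tilde B\setminus B$. For $h\in\tilde B\cap B$, $P^{\tilde\S}(h) \le P^{\S}(h) + |P^{\S}(h)-P^{\tilde\S}(h)|$, so this piece contributes at most $\delta + \sum_{\tilde B\cap B}|P^{\S}-P^{\tilde\S}|$. For $h\in\tilde B\setminus B$ we have simultaneously $L^{\tilde\S}(h) > (\ell^*+\epsilon/8+2\sqrt\eta)P^{\tilde\S}(h)$ and $L^{\S}(h)\le(\ell^*+\epsilon/8)P^{\S}(h)$; subtracting, the margin $2\sqrt\eta\,P^{\tilde\S}(h)$ is at most $|L^{\S}(h)-L^{\tilde\S}(h)| + (\ell^*+\epsilon/8)|P^{\S}(h)-P^{\tilde\S}(h)|$. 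Summing over $\tilde B\setminus B$ and using $\ell^*+\epsilon/8\le 1$ gives $2\sqrt\eta\sum_{\tilde B\setminus B}P^{\tilde\S} \le \sum_{\tilde B\setminus B}|L^{\S}-L^{\tilde\S}| + \sum_{\tilde B\setminus B}|P^{\S}-P^{\tilde\S}|$. Because the two pieces are disjoint, the $P$-perturbation mass they consume together is at most $2\eta$ and the $L$-perturbation mass is at most $2\eta$; since the coefficient $1/(2\sqrt\eta)\ge 1$ makes the $\tilde B\setminus B$ contribution dominant, the worst allocation puts all $P$-mass there, yielding the clean total $\sum_{\tilde B}P^{\tilde\S} \le \delta + \tfrac{1}{2\sqrt\eta}(2\eta + 2\eta) = \delta + 2\sqrt\eta$. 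As the threshold $\epsilon/8 + 2\sqrt\eta$ equals $(\epsilon+16\sqrt\eta)/8$, this says exactly that under $\tilde\S$ game~$j$ fails to be $(\epsilon+16\sqrt\eta)$-structured with probability at most $\delta+2\sqrt\eta$, which is the claimed $(\delta+2\sqrt\eta,\,\epsilon+16\sqrt\eta)$-structure.

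I expect the main obstacle to be precisely that a conditional win probability is a \emph{ratio} of two transcript marginals and hence is not itself a linear functional of the state, so the trace-norm hypothesis cannot be applied to it directly. The resolution is to carry the two unnormalized marginals $P$ and $L$ separately --- each a genuine linear functional to which the $\ell_1$ bound applies --- and to convert the fixed deficit margin $2\sqrt\eta$ into a probability bound by the Markov-type step above; this is the source of the characteristic $\sqrt\eta$ loss, and it is what simultaneously forces the relaxation of $\epsilon$ (by $16\sqrt\eta$) and of $\delta$ (by $2\sqrt\eta$). The only care needed is in bookkeeping the shared $\ell_1$ budget across the disjoint pieces so that the constants close exactly rather than merely up to lower-order terms.
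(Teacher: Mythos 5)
Your proof is correct, and it matches the paper's (non-)argument: the paper's entire proof is the remark that the lemma ``follows immediately from the definitions,'' and what you wrote is precisely the from-the-definitions unpacking this entails --- weak simulation controls the classical transcript distributions in $\ell_1$ (dephasing the transcript register is a channel, so the trace-norm bound $2\eta$ passes to the distributions), and a Markov-type thresholding on the unnormalized marginals $P$ and $L$ converts that $\ell_1$ budget into the $2\sqrt\eta$ and $16\sqrt\eta$ losses in the two parameters. Your bookkeeping closes the constants exactly as stated; the only unremarked edge case, $\eta > 1/4$ (where the coefficient $1/(2\sqrt\eta) \geq 1$ fails), is vacuous since then $\delta + 2\sqrt\eta > 1$.
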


\noindent
This lemma follows immediately from the definitions.

\section{Verified quantum dynamics proof sketches}

\subsection{$X\!Z$-determined states}

\def\operatorset{S}

In the CHSH game, each device has two measurement settings, that in the ideal strategy may be identified with Pauli $\sigma_x$ and~$\sigma_z$ operators.  For carrying out tomography, however, it is generally necessary to be able to measure in the~$\sigma_y$ basis as well.  It is possible to extend the CHSH game to one in which the ideal strategy also uses $\sigma_y$ operators on a shared EPR state.  However, this extended game will not satisfy the rigidity property.  The problem is that a device that consistently measures using $-\sigma_y$ will give indistinguishable statistics from one that uses $+\sigma_y$.  (Switching the sign of $\sigma_y$ is equivalent to taking an entry-wise complex conjugate in the computational basis.)  It is impossible to fix the sign of the $\sigma_y$ operator.  

We therefore instead argue that for certain states, reliable tomography can be accomplished without needing to measure in the $\sigma_y$ basis.  This observation is due to McKague~\cite{McKague10thesis} and was suggested earlier by Magniez et al.~\cite{MagniezMayersMoscaOllivier05selftest}.  McKague shows that for $\ket{\varphi} = \frac{1}{\sqrt 2}(\ket{00} + \ket{11})$, an EPR pair, the states $(I \otimes U) \ket{\varphi}$, for any single-qubit real unitary~$U$, and $\text{CNOT}_{24} \ket{\varphi}_{12} \otimes \ket{\varphi}_{34}$, as well as finite tensor products of these states, are exactly determined by their traces against tensor products of $I$, $\sigma_x$ and~$\sigma_z$ operators.  That is, they are determined by the expectations of observables that can be estimated using measurements in the $\sigma_x$ and~$\sigma_z$ bases.  We call such states ``$X\!Z$-determined."  For our applications, we will need to show a larger class of states to be $X\!Z$-determined.  However, characterizing the full set of $X\!Z$-determined states remains an open problem.   

\begin{definition} \label{t:xzdetermineddef}
For a Hilbert space~$\H$, a set of operators $\operatorset \subseteq \L(\H)$ and $d > 0$, a state $\tau \in \L(\H)$ is \emph{determined by~$\operatorset$ with exponent~$d$} if there exists $c > 0$ such that for all~$\epsilon \geq 0$ and any state $\rho \in \L(\H)$, 
\begin{equation} \label{e:xzdetermineddef}
\max_{P \in \operatorset} \abs{ \Tr P (\rho - \tau) } \leq \epsilon
\quad\Longrightarrow\quad 
\trnorm{\rho - \tau} \leq c \, \epsilon^d
 \enspace .
\end{equation}
For $\H = (\C^2)^{\otimes n}$, a state $\tau$ is \emph{$X\!Z$-determined} if it is determined with some exponent~$d > 0$ by the Pauli operators~$\{I, \sigma_x, \sigma_z\}^{\otimes n}$.  
\end{definition}

Robustness is important for applications, but previous work has considered only $\epsilon = 0$.  By {\Lstroke}ojasiewicz's inequality~\cite[Prop.~2.3.11]{BenedettiRisler90semialgebraic} in algebraic geometry, robustness follows from the $\epsilon = 0$ case: 

\begin{lemma}
For a finite-dimensional Hilbert space~$\H$, a state $\tau \in \L(\H)$ is determined by a finite set $\operatorset \subset \L(\H)$ if and only if for any state $\rho \in \L(\H)$, the implication of Eq.~\eqnref{e:xzdetermineddef} holds at $\epsilon = 0$.  
\end{lemma}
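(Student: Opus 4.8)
The forward implication is immediate: if $\tau$ is determined by $\operatorset$ with some exponent $d > 0$ and constant $c$, then taking $\epsilon = 0$ in Eq.~\eqnref{e:xzdetermineddef} forces $\trnorm{\rho - \tau} \leq c \cdot 0^d = 0$, i.e.\ $\rho = \tau$, which is exactly the $\epsilon = 0$ implication. The substance of the lemma is the converse: the purely qualitative fact that the expectations $\{\Tr P \tau : P \in \operatorset\}$ pin down $\tau$ uniquely among states automatically upgrades to a quantitative, stable bound with a strictly positive exponent. The plan is to encode both sides of the implication as continuous semialgebraic functions on the compact set of density matrices and to invoke {\Lstroke}ojasiewicz's inequality.

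Concretely, I would work in the real vector space of Hermitian operators on $\H$, of dimension $(\dim \H)^2$, inside which the set $K$ of density matrices---positive semidefinite, trace one---is a compact semialgebraic set. On $K$ define $g(\rho) = \max_{P \in \operatorset} \abs{\Tr P(\rho - \tau)}$ and $f(\rho) = \trnorm{\rho - \tau}$; both are continuous. The next step is to check that each is semialgebraic. The function $g$ is the maximum over the \emph{finite} set $\operatorset$ of the moduli $\abs{\Tr P(\rho - \tau)}$, each of which is the square root of a polynomial in the real matrix coordinates of $\rho$ and hence semialgebraic, and a finite maximum of semialgebraic functions is semialgebraic. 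The function $f = \Tr \abs{\rho - \tau}$ is semialgebraic because the eigenvalues of a Hermitian matrix are roots of its characteristic polynomial, so the sum of their absolute values is a semialgebraic function of the entries; equivalently, the set $\{(M,t) : \trnorm{M} \leq t\}$ is semialgebraic.

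The hypothesis---that the $\epsilon = 0$ implication holds for every state---says precisely that $\{\rho \in K : g(\rho) = 0\} \subseteq \{\rho \in K : f(\rho) = 0\}$, since $g(\rho) = 0$ means all the prescribed expectations agree while $f(\rho) = 0$ means $\rho = \tau$. {\Lstroke}ojasiewicz's inequality for continuous semialgebraic functions on a compact semialgebraic set \cite[Prop.~2.3.11]{BenedettiRisler90semialgebraic} then supplies constants $c > 0$ and an integer $N \geq 1$ with $f(\rho)^N \leq c\, g(\rho)$ for all $\rho \in K$. Setting $d = 1/N$, whenever $g(\rho) \leq \epsilon$ we get $\trnorm{\rho - \tau} = f(\rho) \leq (c\, g(\rho))^{1/N} \leq c^{1/N} \epsilon^{1/N}$, which is exactly Eq.~\eqnref{e:xzdetermineddef} with exponent $d$ and constant $c^{1/N}$, so $\tau$ is determined by $\operatorset$ with exponent $d$.

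I would flag two points as the places to be careful rather than genuinely hard. First, the direction of the {\Lstroke}ojasiewicz bound must be chosen so that the quantity we wish to control (the trace distance $f$) is dominated by a power of the quantity controlled by hypothesis (the measurement discrepancy $g$); this is legitimate exactly because $g^{-1}(0) \subseteq f^{-1}(0)$, and the estimate would be false in the opposite direction. Second, compactness is essential: it is the finite-dimensionality of $\H$ together with the restriction to \emph{density matrices}, rather than all of $\L(\H)$, that makes $K$ compact and lets {\Lstroke}ojasiewicz apply, and the statement should not be expected to hold verbatim on the non-compact space of all Hermitian operators without an a~priori norm bound. Finiteness of $\operatorset$ enters in the same spirit, to keep $g$ semialgebraic (an infinite supremum need not be).
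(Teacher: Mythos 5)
Your proof is correct and takes essentially the same route as the paper: the paper's entire argument for this lemma is the invocation of {\Lstroke}ojasiewicz's inequality (the very same Prop.~2.3.11 of Benedetti--Risler) applied to the measurement-discrepancy and trace-distance functions on the compact semialgebraic set of density matrices. Your write-up simply makes explicit the semialgebraicity checks and the extraction of the exponent $d = 1/N$ that the paper leaves implicit.
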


Recall that a \emph{stabilizer state} is an $n$-qubit pure state~$\ket \psi$ for which there exists a set of $2^n$ distinct and pairwise commuting operators $\operatorset \subset \{ \pm P : P \in \{I, \sigma_x, \sigma_y, \sigma_z\}^{\otimes n} \}$, the \emph{stabilizer group}, such that $P \ket \psi = \ket \psi$ for all~$P \in \operatorset$~\cite{NielsenChuang00}.  Any set of $n$ operators that generate the stabilizer group~$\operatorset$ are called stabilizer generators for $\ket \psi$.  We prove: 

\begin{theorem} \label{t:XZdeterminedstabilizerstates}
If $\ket \psi \in (\C^2)^{\otimes n}$ is a stabilizer state that has a set of stabilizer generators in $\{I, \sigma_x, \sigma_z\}^{\otimes n}$, and if $U$ is the tensor product of any~$n$ single-qubit real unitaries, then $U \ket \psi$ is $X\!Z$-determined.  
\end{theorem}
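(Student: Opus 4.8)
The plan is to first invoke the preceding {\Lstroke}ojasiewicz lemma to reduce the robust statement to the exact ($\epsilon = 0$) case, and then to prove exact determination using the stabilizer structure together with the hypothesis that $U$ is \emph{real}. Write $\ket\phi = U\ket\psi$ and $\tau = \ketbra\phi\phi$. Since $\operatorset = \{I,\sigma_x,\sigma_z\}^{\otimes n}$ is finite and $\H = (\C^2)^{\otimes n}$ is finite-dimensional, the lemma says it suffices to show: any state $\rho$ with $\Tr(P\rho) = \Tr(P\tau)$ for every $P \in \{I,\sigma_x,\sigma_z\}^{\otimes n}$ must equal $\tau$.

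Let $g_1,\dots,g_n \in \{I,\sigma_x,\sigma_z\}^{\otimes n}$ be the stabilizer generators of $\ket\psi$, and set $h_k = U g_k U^\dagger$, so that $h_k\ket\phi = \ket\phi$ and $\ket\phi$ is the unique common $+1$-eigenvector of the commuting reflections $h_1,\dots,h_n$. The key observation is that each $h_k$ lies in the \emph{real} span of $\{I,\sigma_x,\sigma_z\}^{\otimes n}$: a single-qubit real unitary is orthogonal, so conjugation by it fixes $I$ and maps the real symmetric traceless matrices $\mathrm{span}_\R\{\sigma_x,\sigma_z\}$ onto themselves (it cannot produce the imaginary matrix $\sigma_y$). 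Taking the tensor product, $U(\cdot)U^\dagger$ carries each $\sigma_y$-free Pauli string $g_k$ into a real linear combination of $\sigma_y$-free strings. Hence $h_k$ is measurable from $\{I,\sigma_x,\sigma_z\}$-bases, and by linearity the constraint on $\rho$ gives $\Tr(h_k\rho) = \Tr(h_k\tau) = \bra\phi h_k\ket\phi = 1$ for every $k$.

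I would then finish with a positivity/support argument. Each $h_k$ is a reflection, $h_k^2 = I$, so $\Tr(h_k\rho) = 1$ together with $\Tr\rho = 1$ forces $\Tr(\Pi_-^{(k)}\rho) = 0$, where $\Pi_-^{(k)}$ projects onto the $-1$-eigenspace of $h_k$. Since $\rho \succeq 0$ and $\Pi_-^{(k)} \succeq 0$, this means $\Pi_-^{(k)}\rho = 0$, i.e.\ the support of $\rho$ lies in the $+1$-eigenspace of $h_k$. Intersecting over all $k$ confines $\mathrm{supp}(\rho)$ to the one-dimensional space $\C\ket\phi$, and $\Tr\rho = 1$ then forces $\rho = \tau$. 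This establishes the $\epsilon = 0$ case, and the lemma upgrades it to determination with some exponent $d > 0$, i.e.\ $X\!Z$-determinedness.

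I expect the crux to be the middle step---verifying that $h_k$ remains a combination of $\{I,\sigma_x,\sigma_z\}$ operators. This is exactly where both hypotheses are used: the generators having no $\sigma_y$ factor, and $U$ being real, so that conjugation never introduces a $\sigma_y$. Were $U$ allowed a genuine $\sigma_y$-rotation, $h_k$ would pick up $\sigma_y$ components invisible to $\{I,\sigma_x,\sigma_z\}$ measurements, matching the impossibility, noted above, of fixing the sign of $\sigma_y$. The remaining steps are standard stabilizer-formalism and positivity facts.
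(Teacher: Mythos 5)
Your proof is correct, but it takes a genuinely different route from the paper's. The paper argues compositionally and quantitatively: it first shows that $\ket 0$ is determined by $\{\sigma_z\}$, then applies three exponent-preserving closure properties---under tensor products (so $\ket{0}^{\otimes n}$ is determined by the $n$ single-qubit $\sigma_z$ operators), under unitary conjugation (so any stabilizer state, being a Clifford image of $\ket{0}^{\otimes n}$, is determined with the same exponent by any set of its stabilizer generators), and under invertible linear recombination of the determining set. The theorem then follows because, exactly as in your middle step, conjugation by a tensor product of real (hence orthogonal) single-qubit unitaries maps $\sigma_y$-free Pauli strings to real linear combinations of $\sigma_y$-free strings. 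You keep that crux---which is indeed where both hypotheses enter---but swap out the scaffolding around it: you invoke the {\Lstroke}ojasiewicz lemma up front to reduce to the exact case $\epsilon=0$, and there run a direct positivity/support argument, using $\Tr(h_k\rho)=1$ for the conjugated generators $h_k=Ug_kU^\dagger$ to confine $\mathrm{supp}(\rho)$ to their joint $+1$ eigenspace $\C\,U\ket\psi$. The trade-off: your exact-case argument is shorter and more elementary (it needs neither the base case nor the tensor-product closure property, the least trivial of the paper's three), but the robustness you obtain is non-constructive, since {\Lstroke}ojasiewicz provides no control over the constant $c$ or the exponent $d$; the paper's closure properties carry an explicit exponent through every step and double as reusable tools (e.g., the recombination property is what lets the paper pass to Alice's $\pi/4$-rotated measurement bases). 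Since \defref{t:xzdetermineddef} asks only for \emph{some} exponent $d>0$, both routes fully establish the theorem as stated.
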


In particular, the resource states needed in our verified quantum dynamics protocol, $\ket 0 \otimes (I \otimes H) \ket{\varphi} \otimes (I \otimes \phasegate) \ket{\varphi} \otimes \mathrm{CNOT}_{2,4} (\ket{\varphi} \otimes \ket{\varphi})$, are $X\!Z$-determined, as are the same states with $G$ applied transversally.  This latter consideration is important because Alice's ideal measurement bases in the CHSH game are rotated by $\pi/4$ from $\sigma_x$ and $\sigma_z$.  

The proof of \thmref{t:XZdeterminedstabilizerstates} begins by showing easily that $\ket 0$ is determined by $\{ \sigma_z \}$.  Then apply three closure properties.  First, closure under tensor product for pure states implies that $\ket{0}^{\otimes n}$ is determined by the $n$ operators $\sigma_z \otimes I^{\otimes (n-1)}, \ldots, I^{\otimes (n-1)} \otimes \sigma_z$.  Second, observe by manipulating implication~\eqnref{e:xzdetermineddef} that if $\tau$ is a state determined by~$\operatorset$, then for any unitary~$U$, $U \tau U^\dagger$ is determined with the same exponent by $\{ U P U^\dagger : P \in \operatorset \}$.  Since an arbitrary stabilizer state can be generated by applying Clifford operators to $\ket{0}^{\otimes n}$, this implies that a stabilizer state is determined by any of its sets of stabilizer generators.  Furthermore, if $\tau$ is determined by~$\operatorset = \{ P_1, \ldots, P_s \}$, then for any invertible $s \times s$ matrix~$V$, $\tau$ is determined with the same exponent by $\{ \sum_{j = 1}^{s} V_{ij} P_j : i = 1, \ldots, s\}$.  This implies, e.g., that any $X\!Z$-determined state is also determined by the operators $\{ I, \frac{1}{\sqrt 2}( \sigma_z \pm \sigma_x ) \}^{\otimes n}$, since the $\{ I, \sigma_x, \sigma_z \}^{\otimes n}$ coefficients are functions of the $\{ I, \frac{1}{\sqrt 2}( \sigma_z \pm \sigma_x ) \}^{\otimes n}$ coefficients.

\subsection{State tomography}

The state tomography protocol begins with $(K-1) m$ CHSH games, where $K$ is chosen randomly.  The multi-game CHSH rigidity theorem implies that at the beginning of the $K$th block of $m$, with high probability Alice and Bob share a state that is close to $m$ shared EPR states, possibly in tensor product with an additional state, and that their separate measurement strategies for the next $m$ games are close to the ideal strategy that uses one EPR state at a time.  For the analysis of the state tomography protocol, we may therefore assume that Alice's strategy is exactly ideal and restrict consideration to these $m$ EPR states.  

Bob, on the other hand, does not play more CHSH games, but instead is given by Eve a random permutation of the $m$ indices, and is asked to permute his qubits and prepare many copies of a particular resource state.  (In the main text, this state is specified as $\ket \psi = \ket 0 \otimes (I \otimes H) \ket{\varphi} \otimes (I \otimes \phasegate) \ket{\varphi} \otimes \mathrm{CNOT}_{2,4} (\ket{\varphi} \otimes \ket{\varphi})$.  The $\ket 0$ portion is for preparing the initial states in a computation and implementing the final measurements, and the other subsystems are for teleporting into each of the gates in a universal gate set, e.g., $(I \otimes G) \ket \varphi$ for teleporting into~$G$.  However, after teleporting into $G$, an $H$ correction may or may not be required.  To maintain the blindness property of the protocol, i.e., to avoid leaking any information about the computation to the separate devices, it is of technical use to have available the resource state $\ket \psi \otimes \ket \varphi$, where the extra EPR state is used for teleporting into the identity gate when an $H$ correction is not needed.)  

Note that Bob's reduced density matrix is maximally mixed, so the probability that he can measure the correct $11$-qubit resource state is only $1/2^{11}$.  However, since the states $\big\{ (P \otimes I) \ket \varphi : P \in \{I, \sigma_x, \sigma_y, \sigma_z \} \big\}$ form an orthonormal basis, so too do the states 
\begin{multline} \label{e:resourcestatebasis}
P^{(0)} \ket 0 \otimes (P^{(1)} \otimes I) \ket \varphi \otimes (P^{(2)} \otimes H) \ket{\varphi} \otimes (P^{(3)} \otimes \phasegate) \ket{\varphi} \\ \otimes 
(P_1^{(4)} \otimes P_3^{(5)} \otimes \mathrm{CNOT}_{2,4}) (\ket{\varphi}_{12} \otimes \ket{\varphi}_{34})
 \enspace ,
\end{multline}
where $P^{(0)} \in \{I, \sigma_x\}$ and the other $P^{(j)}$ vary over $\{I, \sigma_x, \sigma_y, \sigma_z\}$.  Any of the states in Eq.~\eqnref{e:resourcestatebasis} are equally useful resources for computation by teleportation, as Eve can adjust for the $P^{(j)}$ operators in her classical Pauli frame~\cite{Knill05}.  

Therefore define an ideal state tomography protocol as one in which Alice and Bob's initial state consists of $m$ shared EPR states, possibly in tensor product with an additional state; Alice plays honestly $m$ CHSH games, directed by Eve; and Eve sends Bob a random $m$-item permutation and requests that he return the results of measuring consecutive $11$-qubit blocks of permuted qubits in the basis of Eq.~\eqnref{e:resourcestatebasis}.  Eve rejects if the tomography statistics returned by Alice are inconsistent with Bob's reported outcomes.  More precisely, she rejects if the fraction of times Bob reports any particular state differs from $1/2^{11}$ by more than $\sqrt{(\log m) / m}$, i.e., about $\sqrt{\log m}$ standard deviations, or if for any state any of its $\{I,X,Z\}^{\otimes 11}$ Pauli coefficients differ from the corresponding observed estimates by more than $\sqrt{(\log m)/m}$.  We show: 

\begin{theorem} \label{t:statetomographyroughly}
In an ideal state tomography protocol, if Alice plays honestly, then: 
\begin{description}
\item[Completeness:]
If Bob plays honestly, then Eve accepts with high probability.  
\item[Soundness:]
If Eve accepts with high probability, then there is a high probability that, after Bob and before Alice's play, for most of the consecutive $11$-qubit subsystems, Alice's reduced state on the subsystem is close to the state in Eq.~\eqnref{e:resourcestatebasis} that Bob reported to Eve.  
\end{description}
\end{theorem}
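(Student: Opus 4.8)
The plan is to prove completeness and soundness separately, in each case reducing to the $X\!Z$-determinedness of the resource states (\thmref{t:XZdeterminedstabilizerstates}) combined with a concentration argument over the $m/11$ blocks. Throughout I use two structural facts: the states of Eq.~\eqnref{e:resourcestatebasis} form an orthonormal basis, and the EPR correlation $(M \otimes I)\ket\varphi = (I \otimes M^T)\ket\varphi$, so that Bob's reported outcome on an $11$-qubit block labels a pure state $\tau$ and fixes which state Alice's corresponding reduced system is to be compared against. Because Alice's honest CHSH measurements are in bases rotated by $\pi/4$ from $\sigma_x$ and $\sigma_z$, I work with the versions of the resource states with $\phasegate$ applied transversally; these remain $X\!Z$-determined by \thmref{t:XZdeterminedstabilizerstates} together with the closure remark that an $X\!Z$-determined state is also determined by $\{I, \tfrac1{\sqrt2}(\sigma_z \pm \sigma_x)\}^{\otimes n}$.

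For completeness, suppose Bob plays honestly. His projective measurement in the basis of Eq.~\eqnref{e:resourcestatebasis} collapses Alice's corresponding block to the pure state $\tau$ he reports, and since his reduced density matrix is maximally mixed each of the $2^{11}$ outcomes occurs with probability exactly $1/2^{11}$. Conditioned on his outcome, Alice's single-shot (rotated) $\{\sigma_x,\sigma_z\}$ measurements give, for each block, an unbiased $\pm1$ estimator of the relevant $\{I,\sigma_x,\sigma_z\}^{\otimes 11}$ Pauli expectation of $\tau$. First I would fix a reported state $\tau$ and a Pauli $P$ and collect the $\Theta(m)$ blocks on which Bob reported $\tau$ and Alice's random settings are compatible with $P$; a Hoeffding bound (or Azuma, to absorb the mild correlations introduced by Eve's random permutation) then shows that the empirical frequency of $\tau$ and the empirical estimate of $\Tr(P\tau)$ each lie within $\sqrt{(\log m)/m}$ of their means except with probability $\poly(1/m)$. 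A union bound over the $O(1)$ pairs $(\tau,P)$ shows Eve accepts with high probability.

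For soundness, let Bob's strategy be arbitrary, and after his play let $\rho_i$ denote Alice's conditional reduced state on block $i$ given that he reported outcome $\beta_i$. The crux is to run the completeness estimate in reverse: Eve accepting with high probability means that for every reported state $\tau$ and Pauli $P$, the empirical estimate of $\Tr(P\tau)$ over the sub-ensemble $\{i : \beta_i = \tau\}$ lies within $\sqrt{(\log m)/m}$ of $\Tr(P\tau)$. By the same concentration argument—using the random permutation to render the blocks exchangeable, so the observed empirical averages track the expected single-shot statistics—this forces the averaged conditional state $\bar\rho_\tau = \mathbb{E}_{i:\beta_i=\tau}\,\rho_i$ to satisfy $\max_{P \in \{I,\sigma_x,\sigma_z\}^{\otimes 11}} \abs{\Tr(P(\bar\rho_\tau - \tau))} = O(\sqrt{(\log m)/m})$. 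Applying \thmref{t:XZdeterminedstabilizerstates} to the ($\phasegate$-rotated) resource state $\tau$ then yields $\trnorm{\bar\rho_\tau - \tau} = O((\log m/m)^{d/2})$ for the relevant exponent $d$.

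Finally, since each $\tau$ is pure and $\bar\rho_\tau$ is close to it, most individual blocks must themselves be close to $\tau$: from $\mathbb{E}_{i:\beta_i=\tau}\Tr(\tau\rho_i) = \Tr(\tau\bar\rho_\tau) \geq 1 - \trnorm{\bar\rho_\tau-\tau}$ and $\Tr(\tau\rho_i) \leq 1$, a Markov inequality shows that for all but a vanishing fraction of blocks $\Tr(\tau\rho_i)$ is near $1$, i.e., $\rho_i$ is near $\tau$; summing over the $O(1)$ reported states gives the soundness conclusion. I expect the main obstacle to be precisely this soundness direction—justifying that Eve's observed empirical averages control the expectations $\Tr(P\bar\rho_\tau)$ of the \emph{true} conditional states even though Bob's reported outcomes may be adversarially correlated across blocks. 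This is exactly where Eve's random permutation and a martingale concentration bound are essential, and where the single-shot, destructive nature of Alice's per-block measurements forces the argument through the averaged state $\bar\rho_\tau$ rather than through the individual $\rho_i$ directly.
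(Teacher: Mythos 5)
Your proposal is correct and is essentially the paper's own argument: completeness via Hoeffding, and soundness by fixing the accepting reports, using a martingale to force the block-averaged state's $\{I,\sigma_x,\sigma_z\}^{\otimes 11}$ coordinates to match those of the reported pure state, applying $X\!Z$-determinedness to upgrade to trace-norm closeness, and finishing with purity plus a Markov inequality to pass to most individual blocks. The one point to tighten is that Alice's single-shot statistics are governed by the transcript-conditional states (the paper's $\sigma_j$, deterministic functions of her earlier measurement record), not by the initial reduced states $\rho_i$ directly, so the martingale first controls the realized average of the conditional states, and only after averaging over transcripts (the paper does this at the very end, whereas you fold it into $\bar\rho_\tau$ up front) does one obtain your bound on $\Tr\big(P(\bar\rho_\tau-\tau)\big)$ --- a reordering that is legitimate because Pauli coordinates and fidelity with a pure state are linear and bounded, but which your appeal to ``exchangeability'' from the random permutation somewhat obscures.
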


In these statements, ``with high probability" means with probability inverse polynomially close to one, i.e., at least $1 - 1/m^c$, where the exact exponents are adjustable.  

The completeness property of the protocol is a trivial application of Hoeffding's inequality; the probability of straying by more than $\sqrt{\log m}$ standard deviations is at most $\exp(- \Omega({\log m})) = m^{-\Omega(1)}$.  The soundness property is also mostly a straightforward tomography argument, using that the states in Eq.~\eqnref{e:resourcestatebasis} are all $X\!Z$-determined.  The main technical complication is that the states of Alice's $11$-qubit blocks need not be in tensor product.  Therefore, her measurement results on different blocks need not be independent.  They can be controlled with a suitable martingale.  

First fix a permutation $\sigma \in S_m$ and a string $x \in (\{0,1\}^{11})^{m/11}$ such that, conditioned on Eve sending Bob $\sigma$ and receiving back~$x$, Eve accepts with high probability.  The remaining randomness consists of Alice's measurement bases and results.  Since Alice's measurements commute, we may assume that she measures her qubits in the permuted order, without changing the measurement statistics.  

For $j = 1, \ldots, m/11$, let $\rho_j$ be Alice's initial reduced state on her $j$th block of $11$ qubits.  Our goal is to control most of the states $\rho_j$.  Let $\sigma_j$ be the state of the same qubits just before she begins to measure them.  The state~$\sigma_j$ is a random variable, but is a deterministic function of the transcript $\hA{11 (j-1)}$ of the earlier CHSH games with Alice.  Conditioned on $\hA{11 (j-1)}$, Alice's measurement results for games $11 (j-1) + 1, \ldots, 11 j$ are distributed according to the Pauli coordinates of~$\sigma_j$.  

For each $b \in \{0,1\}^{11}$, let $\pi_b$ be the corresponding state of Eq.~\eqnref{e:resourcestatebasis}, and let $\tau_b$ be the average of those $\sigma_j$ for which Bob reported $x_j = b$.  Using a martingale argument, we can establish that with high probability, for all $b$, $\pi_b$ and $\tau_b$ have similar $\{I, \sigma_x, \sigma_z\}^{\otimes 11}$ Pauli coordinates.  Since $\pi_b$ is $X\!Z$-determined, this implies that all Pauli coordinates of $\pi_b$ and $\tau_b$ are close.  Since $\pi_b$ is a pure state, i.e., an extremal quantum state, a Markov inequality implies that for most $j$ with $x_j = b$, $\pi_b$ and~$\sigma_j$ have close Pauli coordinates.  Finally, average back over the transcripts to get that for most $j$ with $x_j = b$, $\pi_b$ is close to $\rho_j = \sum_{\hA{11(j-1)}} \!\!\! \Pr[\hA{11(j-1)}] \sigma_j(\hA{11(j-1)})$.  This is the claim.  

\thmref{t:statetomographyroughly} says that Bob's measurement usually prepares the correct state on Alice's side---but it says nothing about the distribution of his measurement results.  Since Bob's half of the shared EPR states is maximally mixed, his measurement outcomes are in fact distributed nearly uniformly on most subsystems.  Thus the effect of Bob's actual super-operator is close to that of the ideal super-operator, if we trace out everything except for a random subset of subsystems on Alice's side.

\subsection{Process tomography}

Computation by teleportation uses adaptively chosen two-qubit Bell measurements on prepared resource states.  The state tomography protocol gives Eve a way of ensuring that Alice's initial $m$-qubit state consists of the desired resource states.  The Bell states $\big\{ (P \otimes I) \ket \varphi : P \in \{I, \sigma_x, \sigma_y, \sigma_z\} \big\}$, eigenstates of the Bell measurement, are themselves $X\!Z$-determined states.  A symmetrical protocol, with the roles of Alice and Bob switched, could thus be used to prepare these states in Bob's initial $m$-qubit state, before his play begins.  However, for verified quantum computation we need a stronger conclusion, that \emph{after} Bob's measurements collapse Alice's initial state into copies of the resource states of Eq.~\eqnref{e:resourcestatebasis}, Alice's operations indeed implement Bell measurements.  This is a stronger conclusion because, as explained earlier, it is possible that dishonest operations on Alice's side---far from Bell measurements---could still collapse Bob's initial state to many copies of Bell states.  

The process tomography protocol, to ensure Alice's honesty, begins in the same way as state tomography, with $(K-1) m$ CHSH games refereed between both players.  The multi-game rigidity theorem again allows us to assume that the devices then share $m$ EPR states and would play the $K$th block of games honestly, according to the ideal CHSH game strategy.  However, Eve only continues playing CHSH games with Bob.  To Alice, Eve repeatedly sends random pairs of indices, drawn uniformly without replacement from $\{1, \ldots, m\}$, requesting the results of Bell measurements on the specified qubits.  Eve does not try to draw any statistical conclusions from the reported results, but instead rejects if Alice's reported result for any pair of qubits is inconsistent with Bob's reported outcomes.  For example, if Alice reports measuring the Bell state $\ket \varphi$ and Bob is asked to measure the same two qubits both in the $\sigma_z$ basis, then the parity of Bob's measurement results should be~$0$.  This is because $\sigma_z \otimes \sigma_z$ is a stabilizer for $\ket \varphi$: $(\sigma_z \otimes \sigma_z) \ket \varphi = \ket \varphi$.  The four Bell states are all stabilized by $\pm \sigma_z \otimes \sigma_z$ and $\pm \sigma_x \otimes \sigma_x$, with four different sign choices.  If Bob is asked to measure one of the two qubits in the $\sigma_x$ basis and one in the $\sigma_z$ basis, then his results for that pair do not give a check on Alice.  

From Alice's perspective, the interaction is indistinguishable from the questions she would be asked in an actual computation; she is entirely unaware that Bob is playing CHSH games instead of preparing resource states.  

We show: 

\begin{theorem}
In the process tomography protocol, if after the first $(K-1) m$ CHSH games, Bob plays honestly on his halves of $m$ shared EPR states, then: 
\begin{description}
\item[Completeness:]
If Alice plays honestly, Eve accepts with probability one.  
\item[Soundness:]
If Eve accepts with high probability, then the result of Alice's super-operator applied to the initial state is close to that of applying the ideal Bell measurements super-operator to the initial state.  
\end{description}
\end{theorem}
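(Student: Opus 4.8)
\textbf{Completeness} is immediate. If Alice plays honestly she performs a genuine Bell measurement on each requested pair of her EPR halves, and her reported outcome records the eigenvalues of the stabilizers $\sigma_x \otimes \sigma_x$ and $\sigma_z \otimes \sigma_z$ of the resulting Bell state. For each EPR pair the correlations $(\sigma_z \otimes \sigma_z)\ket{\varphi} = \ket{\varphi}$ and $(\sigma_x \otimes \sigma_x)\ket{\varphi} = \ket{\varphi}$ identify Bob's honest $\sigma_z$- or $\sigma_x$-measurement on his halves with the corresponding observable on Alice's halves. Hence whenever Bob is asked to measure both qubits of a pair in the same basis, the parity of his outcomes equals the eigenvalue recorded by Alice's reported outcome with certainty; every check passes deterministically, and Eve accepts with probability one.

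For \textbf{soundness}, the plan is to recast process tomography as state tomography on the Choi--Jamiolkowski state of Alice's super-operator, using Bob's EPR halves as the reference system. I would first invoke \thmref{t:sequentialCHSHgames} to assume that Bob's strategy is exactly ideal, so that his two CHSH measurements are the observables $\sigma_z$ and $\sigma_x$ on his halves of the $m$ shared EPR pairs. Let $\Lambda_A$ denote the actual super-operator Alice applies to her $m$ qubits, recording her reported outcomes in a classical register. Then $\Omega = (\Lambda_A \otimes I_B)\big((\ketbra{\varphi}{\varphi})^{\otimes m}\big)$ is exactly the Choi state of $\Lambda_A$, and proving $\Omega$ close to the Choi state $\Omega_{\mathrm{ideal}}$ of the genuine Bell-measurement super-operator is equivalent to the desired conclusion.

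The checks probe $\Omega$ as follows. By the transpose identity $(M \otimes I)\ket{\varphi} = (I \otimes M^T)\ket{\varphi}$, together with $\sigma_z^T = \sigma_z$ and $\sigma_x^T = \sigma_x$, Bob's $\sigma_z$- or $\sigma_x$-measurement on his half of pair $i$ is equivalent to the same measurement applied to the reference copy of qubit $i$ inside $\Omega$. Eve's consistency checks therefore compare the $\sigma_z \otimes \sigma_z$ and $\sigma_x \otimes \sigma_x$ eigenvalues recorded in Alice's classical register against the corresponding $\{I, \sigma_x, \sigma_z\}^{\otimes 2}$ Pauli coordinates of $\Omega$ on the reference. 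For the ideal super-operator each conditional reference state is a Bell state, which has stabilizer generators $\sigma_z \otimes \sigma_z$ and $\sigma_x \otimes \sigma_x$ lying in $\{I, \sigma_x, \sigma_z\}^{\otimes 2}$ and is thus $X\!Z$-determined by \thmref{t:XZdeterminedstabilizerstates}. Acceptance with high probability forces the checked $\{I, \sigma_x, \sigma_z\}$ coordinates of the conditional blocks of $\Omega$ to be close to those of $\Omega_{\mathrm{ideal}}$, and $X\!Z$-determinedness then upgrades this to closeness in trace norm, automatically fixing the unchecked $\sigma_y \otimes \sigma_y$ correlation and the single-qubit coordinates (whose signs are exactly the information one cannot test directly).

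The main obstacle, exactly as in the state tomography argument, is that Alice's super-operator on $m$ qubits need not factor as a tensor product over the random matching of measured pairs, so her reported outcomes on different pairs need not be independent and the conditional blocks of $\Omega$ need not be mutually in tensor product. I would control these correlations using the randomness of the matching (indices drawn without replacement) together with a martingale argument in the style of \thmref{t:statetomographyroughly}: conditioning on the transcript so far, the outcome distribution on the next measured pair is governed by the appropriate reduced block of $\Omega$, and one shows that with high probability the checked Pauli coordinates are close to ideal for most pairs. Averaging back over transcripts and applying $X\!Z$-determinedness blockwise then gives that $\Omega$, and hence $\Lambda_A$, is close to ideal. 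The strengthening from the state-tomography conclusion---which constrains only the reduced states left after Bob's collapse---to a statement about the full process is precisely what the Choi-state reference buys: controlling $\Omega$ controls Alice's action on every input, not merely on the collapsed resource states.
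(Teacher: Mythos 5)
Your completeness argument matches the paper's (both are immediate), and your soundness argument is correct in substance but organized quite differently from the paper's, so a comparison is worthwhile. The paper's proof is a sequential, operator-level hybrid: without loss of generality Alice measures a complete set of $2^m$ orthogonal projections; since Eve rejects on any single inconsistency, each pair's test passes with high probability, and the Gentle Measurement Lemma gives, per pair $j$, that Alice's alleged $j$th Bell-measurement super-operator and the ideal one act nearly identically on the initial state. These per-pair approximations cannot be chained directly, because the ideal measurement on pair $1$ need not commute with Alice's alleged measurement on pair $2$; the paper's fix is exactly your transpose identity, used to rewrite each ideal Bell measurement as the same measurement on \emph{Bob's} halves, where it commutes with everything Alice does, yielding a telescoping argument with total error linear in $m$. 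Your proposal uses the same two ingredients---the transpose identity and the fact that Bell states are pinned down by their $\sigma_z\otimes\sigma_z$ and $\sigma_x\otimes\sigma_x$ stabilizers---but deploys them statically: you condition on Alice's full outcome record, control each conditional two-qubit block of the Choi state $\Omega$, and reassemble. This buys a conceptually clean target, since $\Omega \approx \Omega_{\mathrm{ideal}}$ \emph{is} the theorem's conclusion (the initial state is the maximally entangled state), at the cost of having to rebuild a global state from local marginals, a step the paper's hybrid never needs.

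Three points in your sketch need tightening for the reassembly to go through. First, the checks do not estimate all $\{I,\sigma_x,\sigma_z\}^{\otimes 2}$ coordinates of a conditional block, only the two stabilizer coordinates, so $X\!Z$-determinedness as defined does not directly apply; what you need is the fact, from the same section's machinery, that a stabilizer state is determined by any set of its stabilizer generators---near-extremal expectations of the two commuting stabilizers of $\pi_b$ already force trace-norm closeness. Second, the martingale is unnecessary and somewhat misdirected: unlike state tomography, Eve draws no statistical conclusions here but rejects on any single inconsistency, so each pair's check passes with high probability individually and no concentration of empirical frequencies across correlated blocks is needed. What actually resolves the ``blocks not in tensor product'' obstacle is extremality: since each $\pi_b$ is pure and the per-pair projectors commute, a union bound on fidelities shows that a state whose every two-qubit marginal is $\epsilon$-close to the corresponding Bell state is $O(\sqrt{m\epsilon})$-close to their tensor product (this mirrors the Markov-plus-purity step in \thmref{t:statetomographyroughly}). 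Third, closeness of $\Omega$ to $\Omega_{\mathrm{ideal}}$ also requires Alice's reported-outcome distribution to be nearly uniform, which Eve never checks; you get this for free from no-signaling: Bob's unconditional reduced state is exactly maximally mixed, and since the conditional states are close to the mutually orthogonal projectors $\bigotimes_j \pi_{b_j}$, the weights in the mixture must be close to uniform. With these repairs your route is sound and yields error polynomial in $m\epsilon$, comparable to the paper's linear-in-$m$ accumulation.
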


Again, the completeness statement is trivial.  The soundness statement is not difficult.  Fix a permutation of the qubits for which Eve accepts with high probability.  We prove soundness for the protocol in which Alice is given the full permutation at the beginning instead of only two indices at a time; this can only give her more opportunities to cheat.  

\def\RB {R^B}
\def\RBdeca #1#2{#1{R}^B_{#2}}
\def\RBdec #1{#1{R}^B}
\def\PAdecjh #1#2#3{#1{P}^A_{#2}({#3})}
\def\PBdecjh #1#2#3{#1{P}^B_{#2}({#3})}
\def\PBdech #1#2{#1{P}^B({#2})}
\def\AmeasBdecj #1#2{#1{\F}^A_{#2}}
\def\AmeasBBdecj #1#2{#1{\F}^B_{#2}}	% this is the same as above, but I have changed the superscript to B for the supplementary information (in the main text, we use A still)
\def\GAj #1{\G^A_{#1}}
\def\GBj #1{\G^B_{#1}}
\def\GAdecj #1#2{#1{\G}^A_{#2}}
\def\GBdecj #1#2{#1{\G}^B_{#2}}

Let $\rhodecone{\hat}$ be the initial state, consisting of $m$ EPR states.  Without loss of generality, Alice's strategy consists of measuring a complete set of $2^m$ orthogonal projections, and returning the outcome.  For $j = 1, \ldots, m/2$, let $\GAj{j}$ be Alice's super-operator that implements the $j$th alleged Bell measurement, by the appropriate marginal projective measurement.  For $j \leq k$, let $\GAj{j,k} = \GAj{k} \cdots \GAj{j+1} \GAj{j}$.  Alice's full strategy is implemented by $\GAj{1,m/2}$.  Let $\GAdecj{\hat}{j}$ be the ideal super-operator that actually carries out a Bell measurement on the $j$th specified pair of qubits, and $\GAdecj{\hat}{j,k} = \GAdecj{\hat}{k} \cdots \GAdecj{\hat}{j}$.  Our goal is to show that in trace distance 
\begin{equation}
\GAj{1,m/2}(\rhodecone{\hat}) \approx \GAdecj{\hat}{1,m/2}(\rhodecone{\hat})
 \enspace .
\end{equation}

Since Eve accepts all the tests with high probability, it must be that for every $j$, Eve's $j$th Bell measurement test passes with high probability.  By the Gentle Measurement Lemma, this implies that $\GAj{j}(\rhodecone{\hat}) \approx \GAdecj{\hat}{j}(\rhodecone{\hat})$.  For $j = 1$, this gives the first step: 
\begin{equation*}
\GAj{1,m/2}(\rhodecone{\hat}) \approx \GAj{2,m/2} \GAdecj{\hat}{1}(\rhodecone{\hat})
 \enspace .
\end{equation*}
We cannot immediately apply $\GAj{2}(\rhodecone{\hat}) \approx \GAdecj{\hat}{2}(\rhodecone{\hat})$ to continue, because although the $\GAj{j}$ super-operators commute with each other, $\GAj{2}$ might not commute with $\GAdecj{\hat}{1}$.  

An easy trick gets around the problem.  Define $\AmeasBBdecj{\hat}{j}$ to implement a Bell measurement on the $j$th specified pair of qubits on Bob's side.  Let $\AmeasBBdecj{\hat}{j,k} = \AmeasBBdecj{\hat}{k} \cdots \AmeasBBdecj{\hat}{j}$.  Then $\GAdecj{\hat}{j}(\rhodecone{\hat}) = \AmeasBBdecj{\hat}{j}(\rhodecone{\hat})$.  Super-operators acting on Bob's Hilbert space automatically commute with those acting on Alice's space, so we can continue the above derivation: 
\begin{equation*}\begin{split}
\GAj{1,m/2}(\rhodecone{\hat}) 
&\approx \GAj{2,m/2} \GAdecj{\hat}{1}(\rhodecone{\hat})
= \AmeasBBdecj{\hat}{1} \GAj{2,m/2}(\rhodecone{\hat}) \\
&\approx \AmeasBBdecj{\hat}{1} \GAj{3,m/2} \GAdecj{\hat}{2}(\rhodecone{\hat}) = \AmeasBBdecj{\hat}{1,2} \GAj{3,m/2}(\rhodecone{\hat}) \approx \cdots \\
&\approx \AmeasBBdecj{\hat}{1,m/2}(\rhodecone{\hat}) = \GAdecj{\hat}{1,m/2}(\rhodecone{\hat})
 \enspace .
\end{split}\end{equation*}
The total approximation error is linear in~$m$.

\subsection{$\QMIP = \MIP^*$}

The way in which protocols for CHSH games, state and process tomography, and computation are combined to give verified quantum dynamics is sketched in the main text.  We also describe there the main remaining technical obstacle, the issue of Eve choosing her questions adaptively.  
Formally, let $\rho$ be the initial state, and let~$\B$ be the super-operator describing Eve's interactions with Bob in state tomography.  Roughly, state tomography implies that the states Bob prepares on Alice's side are correct up to a small error in trace distance, or 
\begin{equation}
\Tr_B \B(\rho) \approx \Tr_B \hat \B(\hat \rho)
 \enspace ,
\end{equation}
where $\hat \B$ is the ideal super-operator and $\hat \rho$ is an ideal initial state consisting of perfect EPR states.  Similarly, let $\A$ be the super-operator describing Eve's interactions with Alice in a process tomography protocol on Alice's operations; we have 
\begin{equation}
\A(\rho) \approx \hat \A(\rho)
 \enspace .
\end{equation}
Computation by teleportation can be implemented either by choosing Bob's state preparation questions non-adaptively and Alice's process questions adaptively, or vice versa.  We show that these are exactly equivalent regardless of the devices' strategies, i.e., 
\begin{equation}
\Aad \B = \Bad \A
 \enspace ,
\end{equation}
where $\Aad$ and $\Bad$ are the same as~$\A$ and~$\B$, respectively, except with Eve choosing her questions adaptively based on the previous messages.  
Combining these steps, we therefore obtain 
\begin{equation*}\begin{split}
\Tr_B \Bad \A (\rho) 
&\approx \Tr_B \Bad \hat \A (\rho) \\
&= \Aadhat \Tr_B \B (\rho) \\
&\approx \Aadhat \Tr_B \Badhat (\hat \rho)
 \enspace ,
\end{split}\end{equation*}
and thus the actual computation by teleportation protocol leaves on Alice's side nearly the ideal output.  

In this supplement, we would like to highlight two points of the proof that $\QMIP = \MIP^*$: the conversion into a three-round protocol, and the addition of two new provers instead of reusing provers.  

$\QMIP$ is the class of languages decidable by a polynomial-time quantum verifier exchanging polynomially many quantum messages with a polynomial number of quantum provers, who have unbounded computational power and share entanglement but cannot communicate among themselves.  Kempe et al.\ have shown that any QMIP protocol can be converted into a three-turn protocol in which the provers send a quantum message to the verifier, the verifier broadcasts the result of a random coin flip, the provers each send a second quantum message, and then the verifier applies an efficient measurement to decide whether to accept or reject~\cite{KempeKobayashiMatsumotoVidick07qmip}.  Beginning with this protocol transformation, our proof adds two new provers, Alice and Bob.  The classical verifier, Eve, teleports both rounds of messages from the original $k$ provers to Alice, and then directs Alice and Bob to run the original verifier's quantum circuit.  

A natural question is whether it is necessary to add two new provers or if two of the provers already present can be used for implementing verified quantum dynamics.  We conjecture that adding new provers is not necessary.  Broadbent et al., for example, have suggested that a $k$-prover QMIP protocol can be converted to a $k$-prover MIP$^*$ protocol deciding the same language, and the scheme that they present indeed reuses the first two provers to simulate the verifier's quantum computations~\cite{BroadbentFitzsimonsKashefi10qmip}.  However, the analysis of this scheme does not consider all ways in which provers can play dishonestly.  

In fact, any scheme with a structure along the lines presented either by us or by Broadbent et al.\ will be unsound---if it does not first convert to a three-turn protocol or use more sophisticated tricks.  Here is a general counterexample.  Begin with an arbitrary QMIP protocol $\cP$, deciding a language~$L$.  Modify the protocol by adding one new round at the end.  In this last round, the provers can each send classical messages to the verifier, which the verifier simply broadcasts back to all of the provers.  The effect of this final interaction is to allow the provers to communicate with each other.  (Since they share entanglement, they can use quantum teleportation to communicate quantum information, if desired.)  The modified protocol $\cP'$ decides the same language~$L$, with the exact same completeness and soundness parameters, though; communicating in the last step does not help the provers cheat.  

Convert the modified protocol, somehow, into an MIP$^*$ protocol $\cP''$ that uses the first two provers to simulate the verifier's quantum computations.  In particular, they simulate the final acceptance predicate, by some procedure that has traps to detect cheating---in our scheme, CHSH games or state or process tomography.  The problem is that the last round of messages in $\cP'$ is useless, but in $\cP''$ the intercommunication allows the provers to reveal to each other the traps set by the verifier.  This allows them to avoid the traps and cheat freely.  $\cP''$ is unsound.  

Converting to a three-turn protocol at the start deflects this general attack, as does using two fresh provers to run the verifier's quantum computation.  Speculatively, another way to avoid the attack might be to refresh the verifier's secrets---resetting any traps and re-hiding any quantum information---before revealing any message to a prover.  Any messages can carry information that affect the security of the converted MIP$^*$ protocol differently from the original QMIP protocol.

\ifx\compilefullpaper\undefined  
\bibliographystyle{naturemag}
\bibliography{q}

\end{document}
\fi

\bibliographystyle{naturemag}
\small
\bibliography{q}

\noindent {\bf Acknowledgements}
We thank Edgar Bering, Anne Broadbent, Andr{\' e} Chailloux, Matthias Christandl, Roger Colbeck, Tsuyoshi Ito, Robert K{\"o}nig, Matthew McKague, Vidya Madhavan, Renato Renner, Shivaji Sondhi and Thomas Vidick for helpful conversations.  Part of the work conducted while F.U.~was at UC Berkeley, and B.R.~at the Institute for Quantum Computing, University of Waterloo.  B.R.~acknowledges support from NSERC, ARO-DTO and Mitacs.  U.V.~acknowledges support from NSF grant CCF-0905626 and Templeton grant~21674.

\end{document}